\documentclass{article}
\usepackage{lmodern}
\usepackage[margin=1in]{geometry}
\usepackage[utf8]{inputenc} 
\usepackage[T1]{fontenc}    
\usepackage{url}            
\usepackage{booktabs}       
\usepackage{amsfonts}       
\usepackage{nicefrac}       
\usepackage{microtype}      
\usepackage{natbib}

\usepackage{amsmath}
\usepackage{amsthm}
\usepackage{color}
\usepackage{graphicx}
\usepackage{setspace}
\usepackage{esint}

\usepackage[unicode=true,bookmarks=true,bookmarksnumbered=false,bookmarksopen=false,breaklinks=false,pdfborder={0 0 0},pdfborderstyle={},backref=page,colorlinks=true]{hyperref}
\hypersetup{pdftitle={Estimating Convergence},pdfauthor={Biswas Jacob Vanetti},linkcolor=RoyalBlue,citecolor=RoyalBlue}
\usepackage[dvipsnames,svgnames,x11names,hyperref]{xcolor}

\newtheorem{Th}{Theorem}[section]

\newtheorem{Prop}[Th]{Proposition}
\newtheorem{Asm}[Th]{Assumption}

\newtheorem{Def}[Th]{Definition}

\newcommand{\Bernoulli}{\mathrm{Bernoulli}}

\usepackage[ruled]{algorithm2e}
\usepackage{graphicx}
\usepackage{graphics}

\title{ Estimating Convergence of Markov chains with \( L \)-Lag Couplings }

\author{
	Niloy Biswas \thanks{Department of Statistics, Harvard University, Cambridge, USA. Email: niloy\_biswas@g.harvard.edu}
  	\and 
  	Pierre E. Jacob \thanks{Department of Statistics, Harvard University, Cambridge, USA. Email: pjacob@fas.harvard.edu}
  	\and 
  	Paul Vanetti \thanks{Department of Statistics, University of Oxford, Oxford, UK. Email: paul.vanetti@spc.ox.ac.uk}
}

\begin{document}

\maketitle

\begin{abstract}
Markov chain Monte Carlo (MCMC) methods generate samples that are
asymptotically distributed from a target distribution of interest as the number
of iterations goes to infinity. Various theoretical results provide upper
bounds on the distance between the target and marginal distribution after a
fixed number of iterations. These upper bounds are on a case by case basis and
typically involve intractable quantities, which limits their use for
practitioners. We introduce \(L\)-lag couplings to generate computable,
non-asymptotic upper bound estimates for the total variation or the Wasserstein
distance of general Markov chains. We apply \(L\)-lag couplings to the tasks of
(i) determining MCMC burn-in, (ii) comparing different MCMC
algorithms with the same target, and (iii) comparing exact and
approximate MCMC. Lastly, we (iv) assess the bias of sequential Monte Carlo and
self-normalized importance samplers.
\end{abstract}


\section{Introduction}

Markov chain Monte Carlo (MCMC) algorithms generate Markov chains that are invariant with respect
to probability distributions that we wish to approximate.
Numerous works help understanding the convergence
of these chains to their invariant distributions, hereafter denoted by $\pi$.  Denote by $\pi_t$
the marginal distribution of the chain $(X_t)_{t\geq 0}$ at time $t$.
The discrepancy between $\pi_t$ and $\pi$ can be measured in different ways,
typically the total variation (TV) distance or the Wasserstein distance in the MCMC literature.
Various results provide upper bounds on this distance,
of the form $C(\pi_0) f(t)$, where $C(\pi_0)<\infty$ depends on $\pi_0$
but not on $t$, and where $f(t)$ decreases to zero as $t$ goes to infinity, typically geometrically;
see Section 3 in \citet{roberts_rosenthal_2004} for a gentle survey,
and \citet{durmus2016subgeometric,dalalyan_2017,dwivedi_chen_wainwright_yu_2018} 
for recent examples.
These results typically relate convergence rates 
to the dimension of the state space or to various features of the target. 
Often these results do not provide 
computable bounds on the distance between $\pi_t$ and $\pi$,
as $C(\pi_0)$ and $f(t)$ typically feature unknown constants; although see 
\citet{rosenthal1996analysis} where these constants can be bounded analytically,
and \citet{cowles1998simulation} for examples where they can be numerically approximated.

Various tools have been developed to assess the quality 
of MCMC estimates. Some focus on the behaviour of the chains assuming stationarity,
comparing averages computed within and across chains, or defining various notions of effective sample sizes
based on asymptotic variance estimates 
(e.g.\ \citet{gelman_rubin_1992,geweke_1992,brooks_gelman_1998,vats_flegal_jones_2019},
\citet[Chapter 8]{robert_casella_2013}).
Few tools provide computable bounds on the distance
between $\pi_t$ and $\pi$ for a fixed $t$; some are mentioned in \citet{brooks1998assessing}
for Gibbs samplers with tractable transition kernels.
Notable exceptions, beyond \citet{cowles1998simulation} mentioned above, include the method of 
\citet{valen_johnson_1996,valen_johnson_1998} which relies on coupled Markov chains.
A comparison with our proposed method will be given in Section \ref{sec:johnson}. 

We propose to use \(L\)-lag couplings of Markov chains to estimate
the distance between $\pi_t$ and $\pi$ for a fixed time $t$,
building on $1$-lag couplings used to obtain unbiased estimators
in \citet{glynn_2014,jacob_2019}.
The discussion of \citet{jacob_2019} mentions that upper bounds on the TV
between $\pi_t$ and $\pi$ can be estimated with such couplings.
We generalize this idea to \(L\)-lag couplings, which provide sharper bounds, particularly for small values of $t$. 
The proposed technique extends to a class of probability metrics \citep{sriperumbudur2012empirical} 
beyond TV. 
We demonstrate numerically that the bounds 
provide a practical assessment of convergence for 
various popular MCMC algorithms, on either discrete or continuous and possibly high-dimensional spaces.
The proposed bounds can be used to (i) determine burn-in
period for MCMC estimates, to (ii) compare different MCMC algorithms targeting the same distribution, or
to (iii) compare exact and approximate MCMC algorithms, such as Unadjusted and Metropolis-adjusted 
Langevin algorithms, providing a computational companion to studies such as \citet{dwivedi_chen_wainwright_yu_2018}.
We also (iv) assess the bias of sequential Monte Carlo and self-normalized importance samplers.

In Section \ref{section:L_lag_couplings} we
introduce \(L\)-lag couplings to estimate metrics between marginal and invariant distributions
of a Markov chain. We illustrate the method on simple examples,
discuss the choice of $L$, and compare with the approach of \citet{valen_johnson_1996}. In Section \ref{section:experiments} we
consider applications including Gibbs samplers on the Ising model and gradient-based MCMC algorithms on log-concave targets. 
In Section \ref{section:SMC_bound} we assess the bias of sequential Monte Carlo and self-normalized importance samplers.
All scripts in R are available at \url{https://github.com/niloyb/LlagCouplings}. 

\section{\texorpdfstring{$L$}{L}-lag couplings} \label{section:L_lag_couplings}

Consider two Markov chains \((X_t)_{t \geq 0}\), \((Y_t)_{t \geq 0}\), 
each with the same
initial distribution \(\pi_0\) and Markov kernel \(K\) on
\((\mathbb{R}^d, \mathcal{B}(\mathbb{R}^d))\) which is \(\pi\)-invariant. 
Choose some integer \(L \geq 1 \) as
the lag parameter. 
We generate the two chains using Algorithm \ref{main_algo_coupling}.
The joint Markov kernel \(\bar{K}\) 
on \((\mathbb{R}^d \times \mathbb{R}^d, \mathcal{B}(\mathbb{R}^d
\times \mathbb{R}^d))\) is such that, 
for all $x$, $y$, \(\bar{K}((x,y), (\cdot, \mathbb{R}^d)) = K(x, \cdot)\), and \(\bar{K}((x,y),
(\mathbb{R}^d, \cdot)) = K(y, \cdot)\). 
This ensures that $X_t$ and $Y_t$ have the same marginal distribution at all times $t$.
Furthermore, \(\bar{K}\) is constructed such that the pair of chains can meet exactly
after a random number of steps, i.e.\ the meeting time\ 
$\tau^{(L)} := \inf \{ t > L : X_t = Y_{t-L} \}$
is almost surely finite. Finally we assume that the chains remain faithful after meeting, i.e.\ \( X_t = Y_{t-L} \) for all \( t\geq \tau^{(L)} \).


Various constructions for $\bar{K}$ have been derived in the literature:
for instance coupled Metropolis-Hastings and Gibbs kernels in
\citet{valen_johnson_1996,jacob_2019}, coupled Hamiltonian Monte Carlo kernels in \citet{mangoubi2017rapid,bou2018coupling,heng_2018},
and coupled particle Gibbs samplers in \citet{chopin2015particle,andrieu2018uniform,Jacob_Lindsten_Schon_2019}.

\begin{algorithm}[H] \caption{Sampling \(L\)-lag meeting times}
\DontPrintSemicolon
\textbf{Input:} lag \( L \geq 1\), initial distribution \(\pi_0\), single kernel \( K \) and joint kernel \( \bar{K} \) \;
\textbf{Output:} meeting time \( \tau^{(L)} \), and chains \( (X_t)_{0 \leq t \leq \tau^{(L)} }, (Y_t)_{0 \leq t \leq \tau^{(L)}-L } \) \;
Initialize:  generate \(X_0 \sim \pi_0\), \( X_t|X_{t-1} \sim  K(X_{t-1}, \cdot)\) for \(t=1,\ldots,L\), and \(Y_0 \sim \pi_0\) \;
 \For{\( t > L \)}{
 Sample \( (X_t, Y_{t-L})| (X_{t-1}, Y_{t-L-1}) \sim \bar{K}((X_{t-1}, Y_{t-L-1}), \cdot) \) \;
\lIf{ \( X_{t} = Y_{t-L} \) }{
   \Return \( \tau^{(L)} := t \), and chains \( (X_t)_{0 \leq t \leq \tau^{(L)} },  (Y_t)_{0 \leq t \leq \tau^{(L)}-L } \)
   }
}
 \label{main_algo_coupling}
\end{algorithm}


We next introduce integral probability metrics (IPMs, e.g.\ \citet{sriperumbudur2012empirical}).

\begin{Def} (Integral Probability Metric). Let \(\mathcal{H} \) be a class of real-valued functions on a measurable space \( \mathcal{X} \). For all probability measures \( P, Q \) on  \( \mathcal{X} \), the corresponding IPM is defined as:
\begin{equation} \label{def:IPM}
d_{\mathcal{H}}(P, Q) := \sup\limits_{h \in \mathcal{H}} \Big| \mathbb{E}_{X \sim P} [h(X)] -  \mathbb{E}_{X \sim Q} [h(X)] \Big|.
\end{equation}

\end{Def}

Common IPMs include total variation distance $d_{\text{TV}}$ with \( \mathcal{H}:= \{ h: \sup_{x \in \mathcal{X}} |h(x)| \leq 1/2 \} \), and 1-Wasserstein distance $d_{\text{W}}$ with $\mathcal{H}= \{ h: |h(x)- h(y)| \leq d_{ \mathcal{X}}(x, y), \ \forall x, y \in \mathcal{X} \}$, 
where $d_{ \mathcal{X}}$ is a metric on $\mathcal{X}$ \citep{PeyreCuturi2019OptTransport}. Our proposed method applies to IPMs such that \( \sup_{h \in \mathcal{H}} | h(x) - h(y) |
\leq M_\mathcal{H}(x,y) \) for all \(x, y \in \mathcal{X} \), for some computable function $M_\mathcal{H}$ on  \( \mathcal{X} \times
\mathcal{X} \). For $d_{\text{TV}}$
we have $M_\mathcal{H}(x,y) = 1$, and for $d_\text{W}$ we have
$M_\mathcal{H}(x,y) = d_{ \mathcal{X}}(x, y)$.



With a similar motivation for the assessment of sample approximations, and not
restricted to the MCMC setting, \citet{gorham_mackey_2015} considers a restricted 
class of functions \(\mathcal{H}\) to develop a specific measure of sample quality based on
Stein's identity. \citet{liu2016kernelgoodnessoffit, chwialkowski_2016_kerneltest} combine Stein’s
identity with reproducing kernel Hilbert space theory to develop 
goodness-of-fit tests. \citet{gorham_mackey_2018} obtains further results
and draws connections to the literature on couplings of Markov processes. 
Here we directly aim at upper bounds on the total variation and
Wasserstein distance.
The total variation controls the maximal difference
between the masses assigned by $\pi_t$ and $\pi$ on any measurable set,
and thus directly helps assessing the error of histograms of the target marginals.
The 1-Wasserstein distance controls the error 
made on expectations of 1-Lipschitz functions, which 
with $\mathcal{X}=\mathbb{R}^d$ and $d_{\mathcal{X}}(x, y) = \| x - y \|_1$ 
(the $L_1$ norm on $\mathbb{R}^d$) include all first moments.

\subsection{Main results}
We make the three following assumptions similar to those of \citet{jacob_2019}.

\begin{Asm} \label{assumption_1} (Marginal convergence and moments.) For all $h\in \mathcal{H}$, as $t \rightarrow \infty$, $\mathbb{E}[h(X_t)] \rightarrow \mathbb{E}_{X \sim \pi} [h(X)]$. Also, $\exists \eta > 0, D < \infty$ such that $\mathbb{E}[M_{\mathcal{H}}(X_t, Y_{t-L})^{2+\eta}] \leq D$ for all $t \geq L$.
\end{Asm}
The above assumption is on the marginal convergence of the MCMC algorithm
and on the moments of the associated chains.
The next assumptions are on the coupling operated by the joint kernel $\bar{K}$.

\begin{Asm} \label{assumption_2} (Sub-exponential tails of meeting times.) The chains are such that the meeting time \( \tau^{(L)} := \inf \{ t > L : X_t = Y_{t-L} \} \) satisfies \( \mathbb{P}( \frac{\tau^{(L)}-L}{L} >t) \leq C \delta^t \) for all $t\geq 0$, for some constants $C<\infty$ and $\delta \in (0,1)$.
\end{Asm}
The above assumption can be relaxed to allow for polynomial tails as in \citet{middleton2018unbiased}.
The final assumption on faithfulness is typically satisfied by design.

\begin{Asm} \label{assumption_3} (Faithfulness.) The chains stay together after meeting: $X_t = Y_{t-L}$ for all $t \geq \tau^{(L)}$.
\end{Asm}

We assume that the three assumptions above hold in the rest of the article. 
The following theorem is our main result.

\begin{Th} \label{ipm_upper_bound} (Upper bounds.) For an IPM with function set $\mathcal{H}$ and upper bound $M_{\mathcal{H}}$, with the Markov chains \( (X_t)_{t \geq 0}, (Y_t)_{t \geq 0} \) 
    satisfying the above assumptions, for any $L\geq 1$, and any $t\geq 0$,
\begin{equation} \label{eq:ipm_upper_bound}
d_{\mathcal{H}}(\pi_t, \pi) \leq  \mathbb{E} \Big[ \sum_{j=1}^{ \bigl\lceil \frac{\tau^{(L)} - L -t}{L} \bigr\rceil }  M_{\mathcal{H}}(X_{t+jL}, Y_{t+(j-1)L}) \Big].
\end{equation}
\end{Th}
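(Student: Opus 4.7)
The plan is to derive the bound through a telescoping representation of $\mathbb{E}_\pi[h(X)] - \mathbb{E}[h(X_t)]$ along the subsequence $(X_{t+jL})_{j\geq 0}$, then exchange a supremum over $h\in\mathcal{H}$ with expectations using the pointwise bound $M_{\mathcal{H}}$.

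First, I would fix $h\in\mathcal{H}$ and write, using Assumption \ref{assumption_1} (which gives $\mathbb{E}[h(X_{t+jL})]\to \mathbb{E}_\pi[h(X)]$),
\begin{equation*}
\mathbb{E}_\pi[h(X)] - \mathbb{E}[h(X_t)] \;=\; \sum_{j=1}^{\infty} \bigl( \mathbb{E}[h(X_{t+jL})] - \mathbb{E}[h(X_{t+(j-1)L})] \bigr).
\end{equation*}
Because $X_s$ and $Y_s$ share the marginal $\pi_s$ at every time $s$ (this is exactly the property encoded in $\bar K$ having the correct marginals), I can substitute $\mathbb{E}[h(X_{t+(j-1)L})] = \mathbb{E}[h(Y_{t+(j-1)L})]$, so that each term in the telescoping sum becomes $\mathbb{E}[h(X_{t+jL}) - h(Y_{t+(j-1)L})]$.

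Next I would swap the sum and expectation to get $\mathbb{E}\bigl[\sum_{j\geq 1} (h(X_{t+jL}) - h(Y_{t+(j-1)L}))\bigr]$, provided absolute summability holds. By Assumption \ref{assumption_3} (faithfulness), $X_{t+jL} = Y_{t+(j-1)L}$ as soon as $t+jL \geq \tau^{(L)}$, so the integrand is a \emph{finite} sum: the nonzero terms are exactly those with $jL < \tau^{(L)} - t$, i.e.\ $1 \leq j \leq \lceil (\tau^{(L)}-L-t)/L \rceil$ (an elementary check with a few cases verifies this ceiling is the correct largest index). Taking absolute values inside the expectation, then applying the defining inequality $|h(x)-h(y)|\leq \sup_{h\in\mathcal{H}}|h(x)-h(y)|\leq M_{\mathcal{H}}(x,y)$ pointwise (which is uniform in $h$), I obtain
\begin{equation*}
\bigl| \mathbb{E}_\pi[h(X)] - \mathbb{E}[h(X_t)] \bigr| \;\leq\; \mathbb{E}\Big[ \sum_{j=1}^{\lceil (\tau^{(L)} - L - t)/L \rceil} M_{\mathcal{H}}(X_{t+jL}, Y_{t+(j-1)L}) \Big],
\end{equation*}
and since the right-hand side no longer depends on $h$, I can take $\sup_{h\in\mathcal{H}}$ on the left to produce $d_{\mathcal{H}}(\pi_t,\pi)$.

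The one step that requires care is the Fubini exchange between the series and the expectation. To justify it, I would bound $|h(X_{t+jL}) - h(Y_{t+(j-1)L})| \leq M_{\mathcal{H}}(X_{t+jL}, Y_{t+(j-1)L}) \cdot \mathbf{1}\{\tau^{(L)} > t+jL\}$ and apply Hölder's inequality with conjugate exponents $(2+\eta)/(1+\eta)$ and $2+\eta$: the first factor is controlled by the uniform $(2+\eta)$-moment bound $D$ from Assumption \ref{assumption_1}, and $\mathbb{P}(\tau^{(L)} > t + jL)^{(1+\eta)/(2+\eta)}$ is geometrically summable in $j$ thanks to the sub-exponential tail of $\tau^{(L)}$ in Assumption \ref{assumption_2}. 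This is the only genuinely technical point; everything else is the telescoping identity plus the definition of an IPM. Once absolute summability is in hand, the exchange of sum and expectation, and the pointwise collapse to a random finite sum via faithfulness, deliver the stated bound.
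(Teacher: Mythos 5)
Your proof is correct and follows essentially the same route as the paper's: the paper packages your telescoping identity as the unbiasedness of the $L$-lag estimator $H^{(L)}_t(X,Y)$ (Proposition \ref{prop:l_lag_unbiased}), justified by an $L_2$ Cauchy-sequence argument that rests on the same H\"older-plus-geometric-tail computation you use to establish absolute summability. Your direct $L_1$/Fubini justification of the sum--expectation interchange is a slightly more economical version of that step, sufficient for the bound itself (though not for the finite-variance claim the paper also needs for its unbiased-estimation application).
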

Here $\lceil x\rceil$ denotes the smallest integer above $x$, for $x\in \mathbb{R}$. 
When \( \lceil (\tau^{(L)} - L -t)/L \rceil \leq 0 \), the sum in inequality
\eqref{eq:ipm_upper_bound} is set to zero by convention. We next give a
short sketch of the proof. Seeing the invariant distribution $\pi$ as the limit 
of $\pi_t$ as $t\to\infty$, applying triangle inequalities, 
recalling that $d_{\mathcal{H}}(\pi_s, \pi_t) \leq \mathbb{E}[M_\mathcal{H}(X_s,X_t)]$ for all $s$, $t$,
we obtain 
\begin{flalign} \label{sketchproof:ipm_upper_bound}
d_{\mathcal{H}}(\pi_t, \pi) &\leq \sum_{j=1}^\infty d_{\mathcal{H}}(\pi_{t+jL}, \pi_{t+(j-1)L}) \leq \sum_{j=1}^\infty \mathbb{E}[M_{\mathcal{H}} (X_{t+jL}, X_{t+(j-1)L})].
\end{flalign}
The right-hand side of \eqref{eq:ipm_upper_bound} is retrieved by swapping expectation and limit, and noting that terms 
indexed by $j> \lceil (\tau^{(L)} - L -t)/L \rceil$ are equal to zero by Assumption \ref{assumption_3}.
The above reasoning highlights that increasing \(L\) leads to sharper bounds through the use of fewer triangle inequalities. 
A formal proof is given in the appendices.

Theorem \ref{ipm_upper_bound} gives the following bounds for $d_{\text{TV}}$ and $d_{\text{W}}$,
\begin{align} 
    d_{\text{TV}}(\pi_t, \pi) &\leq  \mathbb{E} \Big[ \max(0, \bigl\lceil \frac{\tau^{(L)} - L -t}{L} \bigr\rceil) \Big],\label{eq:tv_upper_bound}\\
    d_{\text{W}}(\pi_t, \pi) &\leq   \mathbb{E} \Big[ \sum_{j=1}^{ \bigl\lceil \frac{\tau^{(L)} - L -t}{L} \bigr\rceil }  d_{\mathcal{X}}(X_{t+jL}, Y_{t+(j-1)L}) \Big] \label{eq:wasserstein_upper_bound}.
\end{align} 

For the total variation distance, the boundedness part of Assumption
\ref{assumption_1} is directly satisfied. 
For the 1-Wasserstein distance on $\mathbb{R}^d$ with 
$d_{\mathcal{X}}(x, y) = \| x - y \|_1$ (the $L_1$ norm on $\mathbb{R}^d$),
the boundedness part is equivalent to a
uniform bound of \( (2+\eta)\)-th moments of the marginal distributions for
some \( \eta>0 \).

We emphasize that the proposed bounds can be estimated directly by running 
Algorithm \ref{main_algo_coupling} $N$ times independently,
and using empirical
averages. All details of the MCMC algorithms and their couplings mentioned below are provided in the appendices. 

\subsection{Stylized examples}
\subsubsection{ A univariate Normal } \label{section:standard_normal}

We consider a Normal example where we can compute total variation and $1$-Wasserstein distances 
(using the $L_1$ norm on $\mathbb{R}$ throughout) exactly.
The target $\pi$ is \(\mathcal{N}(0,1) \) and the kernel $K$ is that of a
Normal random walk Metropolis-Hastings (MH) with
step size \( \sigma_{\text{MH}}=0.5 \). 
We set the initial distribution $\pi_0$ to be a point mass at $10$. 
The joint kernel $\bar{K}$ operates as follows. Given $(X_{t-1},Y_{t-L-1})$,
sample $(X^\star,Y^\star)$ from a maximal coupling of $p:= \mathcal{N}(X_{t-1},\sigma_{\text{MH}}^2)$
and $q:= \mathcal{N}(Y_{t-L-1},\sigma_{\text{MH}}^2)$. This is done using Algorithm \ref{max_coupling}, which  
ensures $X^\star \sim p$, $Y^\star \sim q$ and $\mathbb{P}(X^\star \neq Y^\star) = d_{\text{TV}}(p,q)$.
\begin{algorithm}[!htb]
\DontPrintSemicolon
\caption{A maximal coupling of \(p\) and \(q\)}
Sample \(X^* \sim p\), and \(W \sim \mathcal{U}(0,1)\)\;
\lIf{ \( p(X^*) W \leq q(X^*) \) }{
   set \(Y^*=X^*\) and \Return \( (X^*,Y^*) \)
   } \lElse {
   sample \( \tilde{Y} \sim q\) and \(\tilde{W} \sim \mathcal{U}(0,1) \) until  \( q(\tilde{Y}) \tilde{W} > p(\tilde{Y}) \). Set \(Y^*=\tilde{Y}\) and \Return \( (X^*,Y^*) \)
   }
\label{max_coupling}
\end{algorithm}

Having obtained $(X^\star, Y^\star)$, sample $U\sim \mathcal{U}(0,1)$; set $X_{t} = X^\star$ if $U< \pi(X^\star)/\pi(X_{t-1})$; otherwise set $X_{t} = X_{t-1}$. With the same $U$, set $Y_{t-L} = Y^\star$ if $U< \pi(Y^\star)/\pi(Y_{t-L-1})$; otherwise set $Y_{t-L} = Y_{t-L-1}$. Such a kernel $\bar{K}$ is a coupling of $K$ with itself, and Assumption \ref{assumption_3} holds by design. The verification of Assumption \ref{assumption_2} is harder but can be done via drift conditions in various cases; we refer to \citet{jacob_2019} for more discussion.

Figure \ref{TV_bounds_1}
shows the evolution of the marginal distribution of the chain, and the TV and $1$-Wasserstein 
distance upper bounds.
We use $L=1$ and $L=150$. For each \(L\), \(N=10000\) independent runs of Algorithm
\ref{main_algo_coupling} were performed to estimate the bounds in Theorem \ref{ipm_upper_bound} by empirical averages. 
Exact distances are shown for comparison. 
Tighter bounds are obtained with larger values of $L$, as discussed further in Section \ref{section:L_choice}. 

\begin{figure} [ht!]
\begin{center}
\includegraphics[width=0.65\paperwidth]{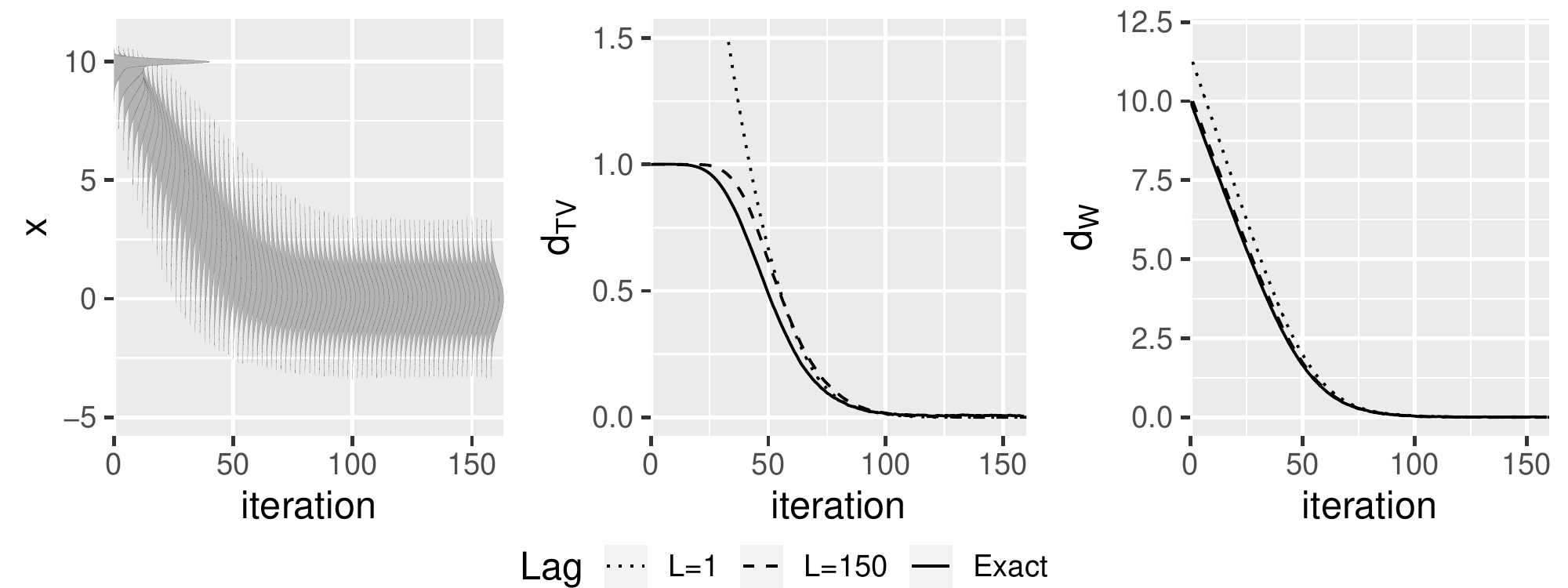}
\caption{Marginal distributions of the chain (left), and upper bounds on the total variation (middle) and the $1$-Wasserstein distance
    (right) between $\pi_t$ and $\pi$, for a Metropolis-Hastings algorithm targeting \(
    \mathcal{N}(0,1) \) and starting from a Dirac mass at $10$. With $L=150$
the estimated upper bounds for both are close to the exact distances.}
\label{TV_bounds_1}
\end{center}
\end{figure}

\subsubsection{ A bimodal target } \label{section:bimodal}
We consider a bimodal target to illustrate the limitations of the proposed technique. The target is 
$\pi  = \frac{1}{2}\mathcal{N}(-4, 1) + \frac{1}{2}\mathcal{N}(4, 1)$,
as in Section 5.1 of \citet{jacob_2019}. The MCMC algorithm is again random walk MH,
with \( \sigma_{\text{MH}}=1,  \pi_0 =
\mathcal{N}(10, 1)\). Now, the chains struggle to jump between the modes, 
as seen in Figure \ref{bimodal_combined} (left), which shows a histogram of the 500th marginal distribution 
from 1000 independent chains.  Figure \ref{bimodal_combined} (right) shows the TV upper bound
estimates for lags $L=1$ and $L=18000$ (considered very large),
obtained with $N\in \{1000, 5000, 10000\}$ independent runs of Algorithm \ref{main_algo_coupling}. 

With $L=18000$, we do not see a difference between the obtained upper bounds,
which suggests that the variance of the estimators is small for the different values of $N$.
In contrast, the dashed line bounds corresponding to lag \(L=1\) are
very different. This is because, over $1000$ experiments, the 1-lag meetings
always occurred quickly in the mode nearest to the initial distribution. However,
over $5000$ and $10000$ experiments, there were instances where one of the two chains jumped to the other mode
before meeting, resulting in a much longer meeting time. Thus the results obtained with $N=1000$
repeats can be misleading. This is a manifestation of the estimation error associated with empirical averages,
which are not guaranteed to be accurate after any fixed number $N$ of repeats. The shape of the bounds obtained with 
$L=18000$, with a plateau, reflects how the chains first visit one of the modes,
and then both.


\begin{figure} [ht!]
\begin{center}
\includegraphics[width=0.65\paperwidth]{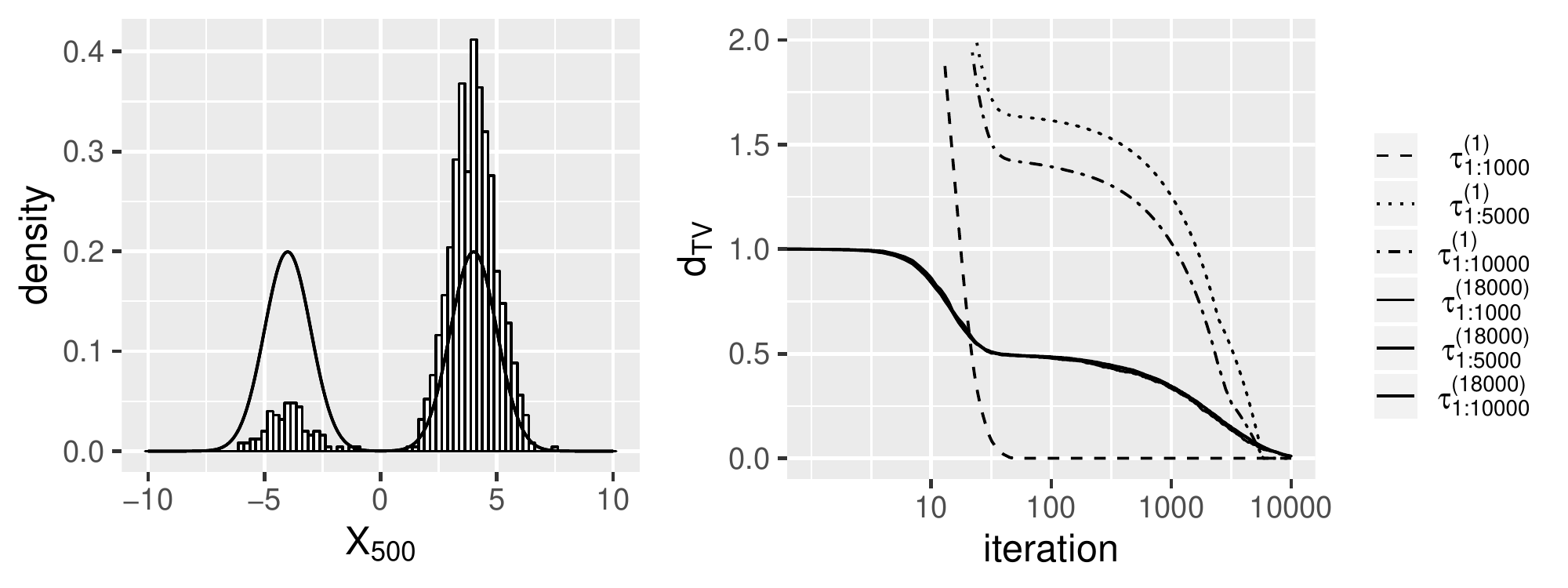}
\end{center}
\caption{ Metropolis-Hastings algorithm with \( \pi_0 \sim \mathcal{N}(10, 1), \sigma_{\text{MH}}=1 \) on a bimodal target. 
Left: Histogram of the 500th marginal distribution 
from 1000 independent chains, and target density in full line.
Right: Total variation bounds obtained with lags $L\in\{1, 18000\}$ and 
$N\in\{1000,5000,10000\}$ independent runs of Algorithm \ref{main_algo_coupling}. }
\label{bimodal_combined}
\end{figure}

\subsection{Choice of lag \texorpdfstring{$L$}{L}} \label{section:L_choice}

Section \ref{section:bimodal} illustrates the importance of the choice of lag
\(L\). Obtaining $\tau^{(L)}$
requires sampling $L$ times from $K$ and $\tau^{(L)}-L$ from $\bar{K}$. 
When $L$ gets large, 
we can consider $X_L$ to be at stationarity, while $Y_0$ still follows $\pi_0$.
Then the distribution of 
$\tau^{(L)}-L$ depends entirely on $\bar{K}$ and not on $L$. 
In that regime the cost of obtaining $\tau^{(L)}$ increases linearly in $L$.
On the other hand, if $L$ is small, the cost might be dominated by 
the $\tau^{(L)}-L$ draws from $\bar{K}$. Thus increasing $L$ might not 
significantly impact the cost until the distribution
of $\tau^{(L)} - L$ becomes stable in $L$.

The point of increasing $L$ is to obtain sharper bounds. For example, from \eqref{eq:tv_upper_bound}
we see that, for fixed $t$, the variable in the expectation takes values in $[0,1]$
with increasing probability as $L\to \infty$, resulting in upper bounds more likely to be in $[0,1]$ and thus non-vacuous. The upper bound is also decreasing in $t$.
This motivates the 
strategy of starting with \( L=1 \), plotting the bounds as in Figure \ref{TV_bounds_1}, and increasing \(L\) until the estimated upper bound
for \( d_{\text{TV}}(\pi_0, \pi) \) is close to 1. 

Irrespective of the cost, the benefits of increasing $L$ eventually diminish: the upper bounds are loose to some extent 
since the coupling operated by $\bar{K}$ is not optimal \citep{thorisson1986maximal}. The couplings considered 
in this work are chosen to be widely applicable but are not optimal in any way.

\subsection{Comparison with Johnson's diagnostics\label{sec:johnson}}

The proposed approach is similar to that proposed by Valen Johnson in \citet{valen_johnson_1996}, which works as follows.
A number $c\geq 2$ of chains start from $\pi_0$ and evolve jointly (without
time lags), such that they all coincide exactly after a random number
of steps $T_c$, while each chain marginally evolves according to $K$. 
If we assume that any draw from $\pi_0$ would be accepted as a draw from $\pi$ in a rejection sampler with probability $1-r$,
then the main result of \citet{valen_johnson_1996} provides the bound: $d_{\text{TV}}(\pi_t,\pi)\leq \mathbb{P}(T_c>t)\times(1-r^c)^{-1}$.
As $c$ increases, for any $r\in(0,1)$ the upper bound approaches $\mathbb{P}(T_c>t)$, which itself is small if $t$ 
is a large quantile of the meeting time $T_c$. 
A limitation of this result is its reliance on
the quantity $r$, which might be unknown or very close to one in challenging settings. Another difference 
is that we rely on pairs of lagged chains and tune the lag $L$, 
while the tuning parameter in \citet{valen_johnson_1996} is the number of coupled chains $c$. 

\section{ Experiments and applications } \label{section:experiments}

\subsection{Ising model} \label{section:burnin}

We consider an Ising model, where the target is
defined on a large discrete space,
namely a square lattice with
$32 \times 32$ sites (each site has 4 neighbors) and periodic boundaries. For a
state $x\in \{-1,+1\}^{32\times 32}$, we define the target probability 
$\pi_\beta(x) \propto \exp (\beta \sum_{i\sim j} x_i x_j)$, where the sum is over all
pairs $i$, $j$ of neighboring sites. As $\beta$ increases, the correlation between
nearby sites increases and single-site Gibbs samplers are known to perform
poorly \citep{mossel2013exact}.  Difficulties in the assessment of the
convergence of these samplers are in part due to the discrete nature of the
state space, which limits the possibilities of visual diagnostics.  Users might
observe trace plots of one-dimensional statistics of the chains, such as
$x\mapsto \sum_{i\sim j} x_i x_j$, and declare convergence when the statistic
seems to stabilize; see \citet{titsias2017hamming,zanella2019informed} 
where trace plots of summary statistics are used to monitor
Markov chains.  

Here we compute the proposed upper bounds for the TV
distance for two
algorithms: a single site Gibbs sampler (SSG) and a parallel
tempering (PT) algorithm, where different chains target different $\pi_\beta$ 
with SSG updates, and regularly attempt to swap their states \citep{geyer1991,syed2019non}. The initial 
distribution assigns $-1$ and $+1$ with equal probability on each site independently. 
For $\beta = 0.46$, we obtain TV bounds for SSG using a lag $L=10^6$, and $N=500$ independent repeats.
For PT we use 12 chains, each targeting $\pi_\beta$ with $\beta$ in an equispaced grid ranging from $0.3$ to $0.46$,
a frequency of swap moves of $0.02$, and a lag $L=2\times 10^4$. 
The results are in Figure \ref{fig:ising:ssgversuspt}, 
where we see a plateau for the TV bounds on SSG
and faster convergence for the TV bounds on PT. 
Our results are consistent with theoretical work on faster mixing times of
PT targeting multimodal distributions including Ising models \citep{woodard2009}. 
Note that the targets
are different for both algorithms, as PT operates on an extended space.
The behavior of meeting times of coupled chains motivated by 
the ``coupling from the past'' algorithm \citep{propp1996exact} for Ising models has been studied
e.g. in \citet{collevecchio2018coupling}.


\begin{figure}[ht!] \begin{center} 
	\includegraphics[width=0.65\paperwidth]{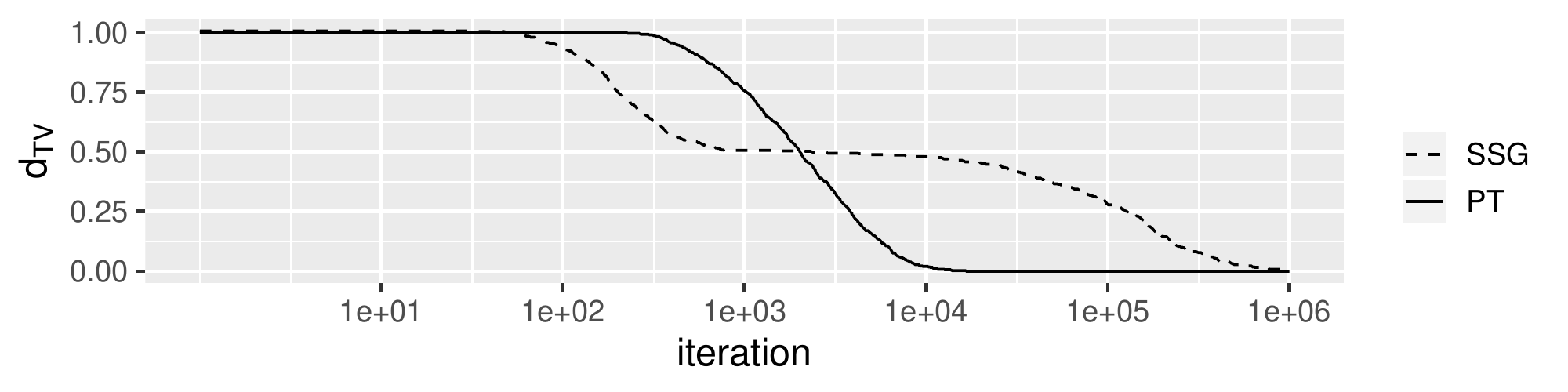}
	\caption{Single-site Gibbs (SSG) versus Parallel Tempering (PT) for an Ising model; bounds on the
    total variation distance between $\pi_t$ and $\pi$, for $t$ up to $10^6$ and inverse temperature $\beta = 0.46$.}
\label{fig:ising:ssgversuspt}
\end{center}
\end{figure}

\subsection{ Logistic regression } \label{section:exact}

We next consider a target on a continuous state space
defined as the posterior in a Bayesian logistic regression.
Consider the German Credit data from
\citet{lichman_2013}. There are $n=1000$ binary responses \( (Y_i)_{i=1}^n \in \{
-1,1 \}^n \) indicating whether individuals are creditworthy or not
creditworthy, and \(d=49\) covariates \(x_i \in \mathbb{R}^d \) for each
individual \( i \). The logistic regression model states \( \mathbb{P}(Y_i =
y_i | x_i) = (1 + e^{-y_i x_i^T \beta} )^{-1} \) with a normal prior \(
\beta \sim \mathcal{N}(0, 10 I_d) \). We can sample from the posterior using
Hamiltonian Monte Carlo (HMC, \citet{neal_1993}) or the P\'olya-Gamma
 Gibbs sampler (PG, \citet{polson_scott_windle_2013}). The former 
involves tuning parameters \(\epsilon_{\text{HMC}}\) and \(S_{\text{HMC}}\) corresponding to a step size and
a number of steps in a leapfrog integration scheme performed at every iteration. We can use the proposed 
bounds to compare convergence associated with HMC for different \(\epsilon_{\text{HMC}}, S_{\text{HMC}}\),
and with the PG sampler. Figure
\ref{fig:pg_vs_hmc_small_p} shows the total variation bounds for HMC with \(
\epsilon_{\text{HMC}} = 0.025 \) and \( S_{\text{HMC}}=4,5,6,7 \) and the corresponding bound for the
parameter-free PG sampler, both starting from \( \pi_0 \sim \mathcal{N}(0, 10 I_d) \). In this example, the 
bounds are smaller for the PG sampler than for all HMC samplers under consideration. 

We emphasize that the HMC tuning parameters associated with the fastest convergence
to stationarity might not necessarily be optimal in terms of asymptotic variance
of ergodic averages of functions of interest; see related discussions in \citet{heng_2018}. 
Also, since the proposed upper bounds are not tight, the true convergence rates of
the Markov chains under consideration may be ordered differently. The proposed upper bounds still allow 
a comparison of how confident we can be about the bias of different MCMC algorithms after a fixed number of iterations. 


\begin{figure} [ht!] 
\begin{center}
\includegraphics[width=0.65\paperwidth]{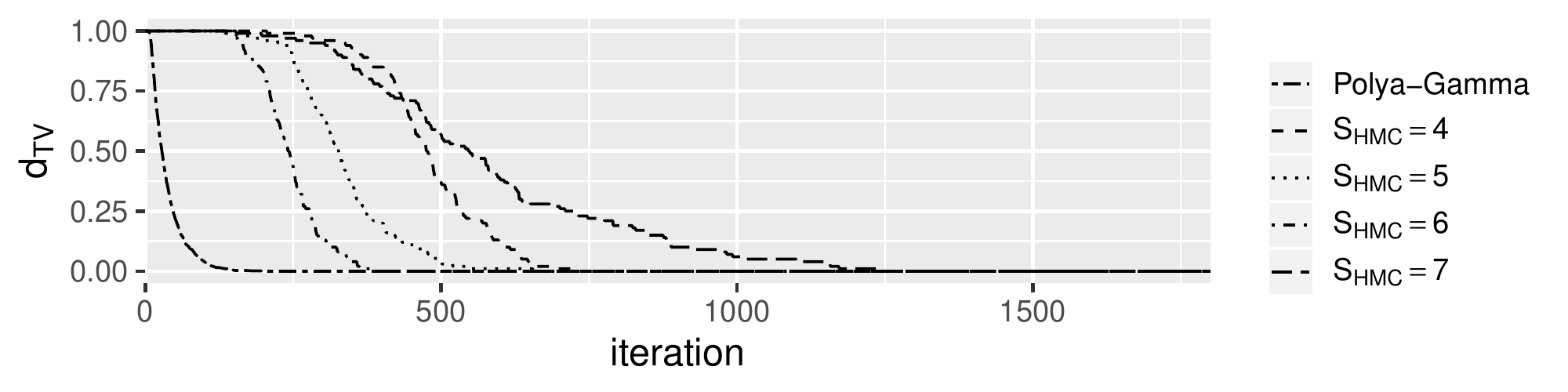}
\end{center}
\caption{Proposed upper bounds on $d_{\text{TV}}(\pi_t, \pi)$ for a P\'olya-Gamma Gibbs sampler and 
for Hamiltonian Monte Carlo on a $49$-dimensional posterior distribution in a logistic regression model. For HMC 
the step size is \( \epsilon_{\text{HMC}}=0.025 \) and the number of steps is \(S_{\text{HMC}}=4,5,6,7\).}
\label{fig:pg_vs_hmc_small_p}
\end{figure}

\subsection{ Comparison of exact and approximate MCMC algorithms } \label{section:approx}

In various settings approximate MCMC methods trade off
asymptotic unbiasedness for gains in computational speed, e.g.\ \citet{johndrow2017scalable,rudolf2018perturbation,dalalyan2019user}. 
We compare an approximate MCMC method (Unadjusted Langevin Algorithm, ULA) with its
exact counterpart (Metropolis-Adjusted Langevin Algorithm, MALA)
in various dimensions. Our target is a multivariate normal: 
\[ \pi = \mathcal{N}(0, \Sigma ) \ \text{where} \ [\Sigma]_{ i, j  } = 0.5^{|i-j|} \text{ for } 1 \leq i, j \leq d. \]
Both MALA and ULA chains start from \( \pi_0 \sim \mathcal{N}(0, I_d) \), and
have step sizes of \( d^{-1/6} \) and \( 0.1 d^{-1/6} \) respectively. Step sizes are
linked to an optimal result of \citet{roberts2001}, and the
0.1 multiplicative factor for ULA ensures that the target distribution for ULA
is close to $\pi$ (see \citet{dalalyan_2017}). We can use couplings to study the 
mixing times \( t_{\text{mix}}(\epsilon) \) of the two algorithms, where 
\( t_{\text{mix}}(\epsilon) := \inf\{ k \geq 0 : d_{\text{TV}}(\pi_k, \pi) < \epsilon \}\). 
Figure \ref{fig:ula_vs_mala_1} highlights how the dimension impacts the estimated upper bounds on
the mixing time \( t_{\text{mix}}(0.25) \), calculated as 
\( \inf\{ k \geq 0 : \widehat{\mathbb{E}} [ \max(0, \lceil (\tau^{(L)} - L -k)/L \rceil) ] < 0.25 \} \)
where $\widehat{\mathbb{E}}$ denotes empirical averages. The results are consistent with 
the theoretical analysis in \citet{dwivedi_chen_wainwright_yu_2018}. For a strongly
log-concave target such as \( \mathcal{N}(0, \Sigma ) \), Table 2 of
\citet{dwivedi_chen_wainwright_yu_2018} indicates mixing time upper bounds of order \( \mathcal{O}(d)\)
and \(\mathcal{O}(d^2) \) for ULA and MALA respectively 
(with a \textit{non-warm} start centered at the unique mode of the target). 
In comparison to theoretical studies in \citet{dalalyan_2017,
dwivedi_chen_wainwright_yu_2018}, our bounds can be directly estimated 
by simulation.
On the other hand, the bounds in \citet{dalalyan_2017,
dwivedi_chen_wainwright_yu_2018} are more explicit about the impact
of different aspects of the problem including dimension, step size, and features of the target.

\begin{figure}[ht!] 
\begin{center}
\includegraphics[width=0.65\paperwidth]{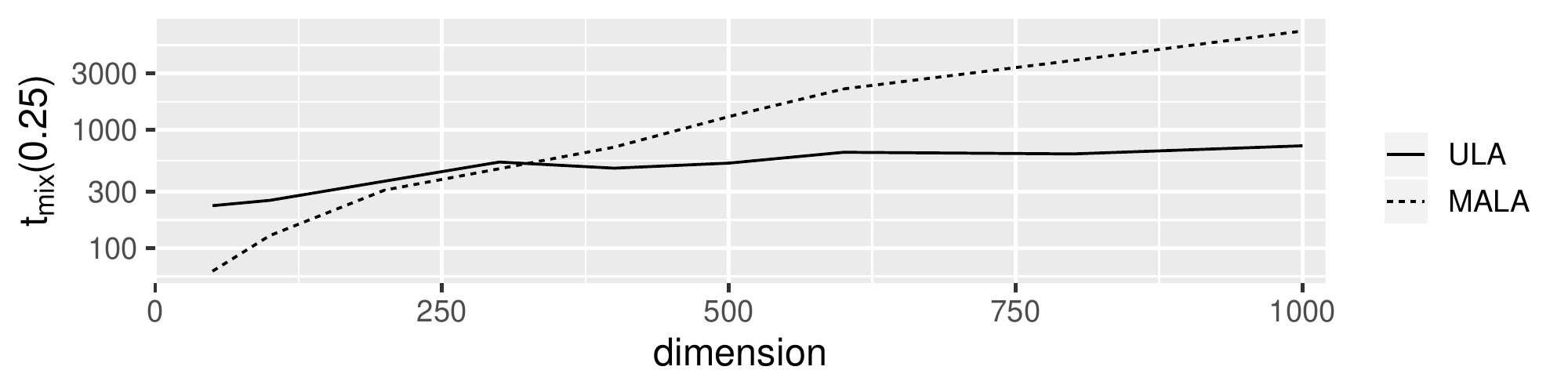}
\end{center}
\caption{Mixing time bounds for ULA and MALA targeting a multivariate Normal distribution, as a function of the dimension.
    Mixing time \( t_{\text{mix}}(0.25) \) denotes 
the first iteration $t$ for which the estimated TV between $\pi_t$ and $\pi$ is less than \(0.25\).}
\label{fig:ula_vs_mala_1}
\end{figure}

\section{Assessing the bias of sequential Monte Carlo samplers} \label{section:SMC_bound}

Lastly, we consider the bias associated with samples 
generated by sequential
Monte Carlo (SMC) samplers \citep{delmoral2006smc}; 
the bias of self-normalized importance samplers can be treated
similarly. Let $(w^n,\xi^n)_{n=1}^N$ 
be the weighted sample from an SMC sampler with \(N\) particles targeting
\(\pi\), and let \(q^{(N)}\) be the marginal distribution 
of a particle $\xi$ sampled among $(\xi^n)_{n=1}^N$ with probabilities $(w^n)_{n=1}^N$.
Our aim is to upper bound a distance between $q^{(N)}$ and $\pi$ for a fixed $N$.
We denote by $\hat{Z}$ the normalizing constant estimator generated by the SMC sampler.

The particle independent MH algorithm (PIMH, \citet{andrieu2010particle})
operates as an independent MH algorithm using SMC samplers as proposals.
Let \((\hat{Z}_{t})_{t\geq0}\) be the normalizing constant estimates from 
a PIMH chain. Consider an \(L\)-lag coupling of a pair of such PIMH chains
as introduced in \citet{pmlr-v89-middleton19a}, initializing the chains by running an SMC sampler. Here $\tau^{(L)}$ is constructed so that
it can be equal to $L$ with positive probability; more precisely, 
\begin{equation} \label{eq:imh_geo}
    \tau^{(L)} - (L-1) \big| \hat{Z}_{L-1} \sim \text{Geometric}( \alpha(\hat{Z}_{L-1}) ),
\end{equation} 
where \( \alpha(\hat{Z}) := \mathbb{E} \big[\min(1, \hat{Z}^*/\hat{Z})
\big| \hat{Z} \big]\) is the average acceptance probability of PIMH, from a
state with normalizing constant estimate \(\hat{Z}\); see \citet[Proposition
8]{pmlr-v89-middleton19a} for a formal statement in the case of \(1\)-lag couplings. With
this insight, we can bound the TV distance between the target and particles generated by SMC samplers, using Theorem
\ref{ipm_upper_bound} applied with $t=0$.
Details are in the appendices. We obtain 
\begin{equation} \label{eq:SMC_upper_bound}
d_{\text{TV}}(q^{(N)}, \pi) \leq \mathbb{E} \Big[ \max(0, \bigl\lceil \frac{\tau^{(L)} - L}{L} \bigr\rceil) \Big] =\mathbb{E} \Big[ \frac{1- \alpha(\hat{Z}_{L-1})}{1-(1- \alpha(\hat{Z}_{L-1}))^{L}} \Big].
\end{equation}

The bound in \eqref{eq:SMC_upper_bound} depends only on the distribution of the
normalizing constant estimator $\hat{Z}$, and can be estimated using independent
runs of the SMC sampler. 
We can also estimate the distribution of $\hat{Z}$
from a single SMC sampler by appealing to large asymptotic results
such as in \citet{berard2014lognormal}, combined with asymptotically valid variance estimators such as
\citet{lee2018variance}. As $N$ goes to infinity we expect $\alpha(\hat{Z}_{L-1})$ to approach one and the proposed upper bound to go to zero.
The proposed bound aligns with the common practice of considering the variance of $\hat{Z}$
as a measure of global performance of SMC samplers.

Existing TV bounds for particle approximations, such as those in \citet[Chapter
8]{del2004feynman} and \citet{huggins2019sequential}, are more informative
qualitatively but harder to approximate numerically.  The result also applies
to self-normalized importance samplers (see \citet[Chapter
3]{robert_casella_2013} and \citet[Chapter 8]{Owen2019mcbook}).  In that case
\citet[Theorem 2.1]{agapiou2017} shows \( d_{\text{TV}}(q^{(N)}, \pi) \leq
6N^{-1}\rho \) for \(\rho=\mathbb{E}_{\xi \sim q}[w(\xi)^2]/\mathbb{E}_{\xi
\sim q}[w(\xi)]^2 \), with $w$ the importance sampling weight function, which
is a simpler and more informative bound; see also \citet{chatterjee2018} for
related results and concentration inequalities.

\section{Discussion }

The proposed method can be used to obtain guidance on the choice of burn-in, to
compare different MCMC algorithms targeting the same distribution, and to
compare mixing times of approximate and exact MCMC methods.  The main
requirement for the application of the method is the ability to generate
coupled Markov chains that can meet exactly after a random but finite number of
iterations. The couplings employed here, and described in the appendices, are not optimal
in any way. As the couplings are algorithm-specific and not target-specific,
they can potentially be added to statistical software such
as PyMC3 \citep{Salvatier2016PyMC} or Stan \citep{stan2017}. 

The bounds are not tight, in part due to the couplings not being maximal
\citep{thorisson1986maximal}, but experiments suggest that they can be practical.
The proposed bounds go to zero as $t$ increases, making
them informative at least for large enough $t$.   The combination of time lags and coupling of more than two
chains as in \citet{valen_johnson_1996} could lead to new diagnostics. Further
research might also complement the proposed upper bounds with lower bounds, obtained by considering specific functions
among the classes of functions used to define the integral probability metrics.

\paragraph{Acknowledgments.} 
The authors are grateful to Espen Bernton, Nicolas Chopin, Andrew Gelman, Lester Mackey, John O'Leary, Christian Robert, 
Jeffrey Rosenthal, James Scott, Aki Vehtari and reviewers for helpful comments on an earlier version of the manuscript. The 
second author gratefully acknowledges support by the National Science Foundation through awards DMS-1712872 and DMS-1844695. 
The figures were created with packages \citep{wilke2017ggridges, wickham2016ggplot2} in R Core Team \citep{Rsoftware}. 

\bibliographystyle{abbrvnat}
\bibliography{tv_coupling_biblo.bib}

\begin{thebibliography}{64}
\providecommand{\natexlab}[1]{#1}
\providecommand{\url}[1]{\texttt{#1}}
\expandafter\ifx\csname urlstyle\endcsname\relax
  \providecommand{\doi}[1]{doi: #1}\else
  \providecommand{\doi}{doi: \begingroup \urlstyle{rm}\Url}\fi

\bibitem[Agapiou et~al.(2017)Agapiou, Papaspiliopoulos, Sanz-Alonso, and
  Stuart]{agapiou2017}
S.~Agapiou, O.~Papaspiliopoulos, D.~Sanz-Alonso, and A.~M. Stuart.
\newblock Importance sampling: Intrinsic dimension and computational cost.
\newblock \emph{Statistical Science}, 32\penalty0 (3):\penalty0 405--431, 08
  2017.
\newblock \doi{10.1214/17-STS611}.
\newblock URL \url{https://doi.org/10.1214/17-STS611}.

\bibitem[Andrieu et~al.(2010)Andrieu, Doucet, and
  Holenstein]{andrieu2010particle}
C.~Andrieu, A.~Doucet, and R.~Holenstein.
\newblock Particle {M}arkov chain {M}onte {C}arlo methods.
\newblock \emph{Journal of the Royal Statistical Society: Series B (Statistical
  Methodology)}, 72\penalty0 (3):\penalty0 269--342, 2010.

\bibitem[Andrieu et~al.(2018)Andrieu, Lee, and Vihola]{andrieu2018uniform}
C.~Andrieu, A.~Lee, and M.~Vihola.
\newblock Uniform ergodicity of the iterated conditional {SMC} and geometric
  ergodicity of particle {G}ibbs samplers.
\newblock \emph{Bernoulli}, 24\penalty0 (2):\penalty0 842--872, 2018.

\bibitem[B{\'e}rard et~al.(2014)B{\'e}rard, Del~Moral, and
  Doucet]{berard2014lognormal}
J.~B{\'e}rard, P.~Del~Moral, and A.~Doucet.
\newblock A lognormal central limit theorem for particle approximations of
  normalizing constants.
\newblock \emph{Electronic Journal of Probability}, 19, 2014.

\bibitem[Bou-Rabee et~al.(2018)Bou-Rabee, Eberle, and Zimmer]{bou2018coupling}
N.~Bou-Rabee, A.~Eberle, and R.~Zimmer.
\newblock Coupling and convergence for {H}amiltonian {M}onte {C}arlo.
\newblock \emph{arXiv preprint arXiv:1805.00452}, 2018.

\bibitem[Brooks and Roberts(1998)]{brooks1998assessing}
S.~P. Brooks and G.~O. Roberts.
\newblock Assessing convergence of {M}arkov chain {M}onte {C}arlo algorithms.
\newblock \emph{Statistics and Computing}, 8\penalty0 (4):\penalty0 319--335,
  1998.

\bibitem[Carpenter et~al.(2017)Carpenter, Gelman, Hoffman, Lee, Goodrich,
  Betancourt, Brubaker, Guo, Li, and Riddell]{stan2017}
B.~Carpenter, A.~Gelman, M.~D. Hoffman, D.~Lee, B.~Goodrich, M.~Betancourt,
  M.~Brubaker, J.~Guo, P.~Li, and A.~Riddell.
\newblock Stan : A probabilistic programming language.
\newblock \emph{Journal of Statistical Software}, 76\penalty0 (1), 1 2017.
\newblock ISSN 1548-7660.
\newblock \doi{10.18637/jss.v076.i01}.

\bibitem[Chatterjee and Diaconis(2018)]{chatterjee2018}
S.~Chatterjee and P.~Diaconis.
\newblock The sample size required in importance sampling.
\newblock \emph{Annals of Applied Probability}, 28\penalty0 (2):\penalty0
  1099--1135, 04 2018.
\newblock \doi{10.1214/17-AAP1326}.
\newblock URL \url{https://doi.org/10.1214/17-AAP1326}.

\bibitem[Chopin and Singh(2015)]{chopin2015particle}
N.~Chopin and S.~S. Singh.
\newblock On particle {G}ibbs sampling.
\newblock \emph{Bernoulli}, 21\penalty0 (3):\penalty0 1855--1883, 2015.

\bibitem[Chwialkowski et~al.(2016)Chwialkowski, Strathmann, and
  Gretton]{chwialkowski_2016_kerneltest}
K.~Chwialkowski, H.~Strathmann, and A.~Gretton.
\newblock A kernel test of goodness of fit.
\newblock In \emph{Proceedings of The 33rd International Conference on Machine
  Learning}, volume~48 of \emph{Proceedings of Machine Learning Research},
  pages 2606--2615, New York, New York, USA, 20--22 Jun 2016. PMLR.
\newblock URL \url{http://proceedings.mlr.press/v48/chwialkowski16.html}.

\bibitem[Collevecchio et~al.(2018)Collevecchio, El{\c{c}}i, Garoni, and
  Weigel]{collevecchio2018coupling}
A.~Collevecchio, E.~M. El{\c{c}}i, T.~M. Garoni, and M.~Weigel.
\newblock On the coupling time of the heat-bath process for the
  {F}ortuin--{K}asteleyn random--cluster model.
\newblock \emph{Journal of Statistical Physics}, 170\penalty0 (1):\penalty0
  22--61, 2018.

\bibitem[Corcoran and Tweedie(2002)]{corcoran2002perfectimh}
J.~N. Corcoran and R.~L. Tweedie.
\newblock Perfect sampling from independent {M}etropolis--{H}astings chains.
\newblock \emph{Journal of Statistical Planning and Inference}, 104\penalty0
  (2):\penalty0 297--314, 2002.
\newblock \doi{10.1016/S0378-3758(01)00243-9}.

\bibitem[Cowles and Rosenthal(1998)]{cowles1998simulation}
M.~K. Cowles and J.~S. Rosenthal.
\newblock A simulation approach to convergence rates for {M}arkov chain {M}onte
  {C}arlo algorithms.
\newblock \emph{Statistics and Computing}, 8\penalty0 (2):\penalty0 115--124,
  1998.

\bibitem[Dalalyan(2017)]{dalalyan_2017}
A.~S. Dalalyan.
\newblock Theoretical guarantees for approximate sampling from smooth and
  log-concave densities.
\newblock \emph{Journal of the Royal Statistical Society: Series B (Statistical
  Methodology)}, 79\penalty0 (3):\penalty0 651--676, 2017.
\newblock \doi{10.1111/rssb.12183}.
\newblock URL
  \url{https://rss.onlinelibrary.wiley.com/doi/abs/10.1111/rssb.12183}.

\bibitem[Dalalyan and Karagulyan(2019)]{dalalyan2019user}
A.~S. Dalalyan and A.~Karagulyan.
\newblock User-friendly guarantees for the {L}angevin {M}onte {C}arlo with
  inaccurate gradient.
\newblock \emph{Stochastic Processes and their Applications}, 2019.

\bibitem[Del~Moral(2004)]{del2004feynman}
P.~Del~Moral.
\newblock \emph{Feynman-{K}ac Formulae}.
\newblock Springer New York, 2004.

\bibitem[Del~Moral et~al.(2006)Del~Moral, Doucet, and Jasra]{delmoral2006smc}
P.~Del~Moral, A.~Doucet, and A.~Jasra.
\newblock Sequential {M}onte {C}arlo samplers.
\newblock \emph{Journal of the Royal Statistical Society: Series B (Statistical
  Methodology)}, 68\penalty0 (3):\penalty0 411--436, 2006.
\newblock \doi{10.1111/j.1467-9868.2006.00553.x}.
\newblock URL
  \url{https://rss.onlinelibrary.wiley.com/doi/abs/10.1111/j.1467-9868.2006.00553.x}.

\bibitem[Durmus et~al.(2016)Durmus, Fort, and Moulines]{durmus2016subgeometric}
A.~Durmus, G.~Fort, and {\'E}.~Moulines.
\newblock Subgeometric rates of convergence in {W}asserstein distance for
  {M}arkov chains.
\newblock In \emph{Annales de l'Institut Henri Poincar{\'e}, Probabilit{\'e}s
  et Statistiques}, volume~52, pages 1799--1822. Institut Henri Poincar{\'e},
  2016.

\bibitem[Dwivedi et~al.(2018)Dwivedi, Chen, Wainwright, and
  Yu]{dwivedi_chen_wainwright_yu_2018}
R.~Dwivedi, Y.~Chen, M.~J. Wainwright, and B.~Yu.
\newblock Log-concave sampling: {M}etropolis--{H}astings algorithms are fast!
\newblock In \emph{Proceedings of the 31st Conference On Learning Theory},
  volume~75 of \emph{Proceedings of Machine Learning Research}, pages 793--797.
  PMLR, 06--09 Jul 2018.

\bibitem[Gelman and Brooks(1998)]{brooks_gelman_1998}
A.~Gelman and S.~P. Brooks.
\newblock General methods for monitoring convergence of iterative simulations.
\newblock \emph{Journal of Computational and Graphical Statistics}, 1998.

\bibitem[Gelman and Rubin(1992)]{gelman_rubin_1992}
A.~Gelman and D.~B. Rubin.
\newblock Inference from iterative simulation using multiple sequences.
\newblock \emph{Statistical Science}, 1992.

\bibitem[Geweke(1998)]{geweke_1992}
J.~Geweke.
\newblock Evaluating the accuracy of sampling-based approaches to the
  calculation of pos- terior moments.
\newblock \emph{Bayesian Statistics}, 1998.

\bibitem[Geyer(1991)]{geyer1991}
C.~Geyer.
\newblock Markov chain {M}onte {C}arlo maximum likelihood.
\newblock \emph{Technical report, University of Minnesota, School of
  Statistics}, 1991.

\bibitem[Glynn and Rhee(2014)]{glynn_2014}
P.~W. Glynn and C.-H. Rhee.
\newblock Exact estimation for {M}arkov chain equilibrium expectations.
\newblock \emph{Journal of Applied Probability}, 51\penalty0 (A):\penalty0
  377--389, 2014.

\bibitem[Gorham and Mackey(2015)]{gorham_mackey_2015}
J.~Gorham and L.~Mackey.
\newblock Measuring sample quality with {S}tein's method.
\newblock In C.~Cortes, N.~D. Lawrence, D.~D. Lee, M.~Sugiyama, and R.~Garnett,
  editors, \emph{Advances in Neural Information Processing Systems 28}, pages
  226--234. Curran Associates, Inc., 2015.
\newblock URL
  \url{http://papers.nips.cc/paper/5768-measuring-sample-quality-with-steins-method.pdf}.

\bibitem[Gorham et~al.(2018)Gorham, Duncan, Vollmer, and
  Mackey]{gorham_mackey_2018}
J.~Gorham, A.~Duncan, S.~Vollmer, and L.~Mackey.
\newblock Measuring {S}ample {Q}uality with {D}iffusions.
\newblock 2018.
\newblock arXiv preprint arXiv:1611.06972v6.

\bibitem[Heng and Jacob(2019)]{heng_2018}
J.~Heng and P.~E. Jacob.
\newblock Unbiased {H}amiltonian {M}onte {C}arlo with couplings.
\newblock \emph{Biometrika}, 106\penalty0 (2):\penalty0 287--302, 2019.

\bibitem[Huggins and Roy(2019)]{huggins2019sequential}
J.~H. Huggins and D.~M. Roy.
\newblock Sequential {M}onte {C}arlo as approximate sampling: bounds, adaptive
  resampling via $\infty$-{ESS}, and an application to particle {G}ibbs.
\newblock \emph{Bernoulli}, 25\penalty0 (1):\penalty0 584--622, 2019.

\bibitem[Jacob et~al.(2019{\natexlab{a}})Jacob, Lindsten, and
  Sch\"on]{Jacob_Lindsten_Schon_2019}
P.~E. Jacob, F.~Lindsten, and T.~B. Sch\"on.
\newblock Smoothing with couplings of conditional particle filters.
\newblock \emph{Journal of the American Statistical Association},
  2019{\natexlab{a}}.
\newblock \doi{10.1080/01621459.2018.1548856}.
\newblock URL \url{https://doi.org/10.1080/01621459.2018.1548856}.

\bibitem[Jacob et~al.(2019{\natexlab{b}})Jacob, O'Leary, and
  Atchad\'e]{jacob_2019}
P.~E. Jacob, J.~O'Leary, and Y.~F. Atchad\'e.
\newblock Unbiased {M}arkov chain {M}onte {C}arlo with couplings.
\newblock \emph{Journal of the Royal Statistical Society: Series B (Statistical
  Methodology)}, 2019{\natexlab{b}}.

\bibitem[Johndrow et~al.(2018)Johndrow, Orenstein, and
  Bhattacharya]{johndrow2017scalable}
J.~E. Johndrow, P.~Orenstein, and A.~Bhattacharya.
\newblock Scalable {MCMC} for {B}ayes shrinkage priors.
\newblock \emph{arXiv preprint arXiv:1705.00841v3}, 2018.

\bibitem[Johnson(1996)]{valen_johnson_1996}
V.~E. Johnson.
\newblock Studying convergence of {M}arkov chain {M}onte {C}arlo algorithms
  using coupled sample paths.
\newblock \emph{Journal of the American Statistical Association}, 91\penalty0
  (433):\penalty0 154--166, 1996.

\bibitem[Johnson(1998)]{valen_johnson_1998}
V.~E. Johnson.
\newblock A coupling-regeneration scheme for diagnosing convergence in {M}arkov
  chain {M}onte {C}arlo algorithms.
\newblock \emph{Journal of the American Statistical Association}, 93\penalty0
  (441):\penalty0 238--248, 1998.

\bibitem[Lee and Whiteley(2018)]{lee2018variance}
A.~Lee and N.~Whiteley.
\newblock Variance estimation in the particle filter.
\newblock \emph{Biometrika}, 105\penalty0 (3):\penalty0 609--625, 2018.

\bibitem[Lichman(2013)]{lichman_2013}
M.~Lichman.
\newblock {UCI} machine learning repository, 2013.
\newblock URL http://archive.ics.uci.edu/ml.

\bibitem[Liu et~al.(2016)Liu, Lee, and Jordan]{liu2016kernelgoodnessoffit}
Q.~Liu, J.~Lee, and M.~Jordan.
\newblock A kernelized {S}tein discrepancy for goodness-of-fit tests.
\newblock In \emph{Proceedings of The 33rd International Conference on Machine
  Learning}, volume~48 of \emph{Proceedings of Machine Learning Research},
  pages 276--284, New York, New York, USA, 20--22 Jun 2016. PMLR.
\newblock URL \url{http://proceedings.mlr.press/v48/liub16.html}.

\bibitem[Mangoubi and Smith(2017)]{mangoubi2017rapid}
O.~Mangoubi and A.~Smith.
\newblock Rapid mixing of {H}amiltonian {M}onte {C}arlo on strongly log-concave
  distributions.
\newblock \emph{arXiv preprint arXiv:1708.07114}, 2017.

\bibitem[Middleton et~al.(2018)Middleton, Deligiannidis, Doucet, and
  Jacob]{middleton2018unbiased}
L.~Middleton, G.~Deligiannidis, A.~Doucet, and P.~E. Jacob.
\newblock Unbiased {M}arkov chain {M}onte {C}arlo for intractable target
  distributions.
\newblock \emph{arXiv preprint arXiv:1807.08691}, 2018.

\bibitem[Middleton et~al.(2019)Middleton, Deligiannidis, Doucet, and
  Jacob]{pmlr-v89-middleton19a}
L.~Middleton, G.~Deligiannidis, A.~Doucet, and P.~E. Jacob.
\newblock Unbiased smoothing using particle independent
  {M}etropolis-{H}astings.
\newblock In \emph{Proceedings of Machine Learning Research}, volume~89 of
  \emph{Proceedings of Machine Learning Research}, pages 2378--2387. PMLR,
  16--18 Apr 2019.
\newblock URL \url{http://proceedings.mlr.press/v89/middleton19a.html}.

\bibitem[Mossel and Sly(2013)]{mossel2013exact}
E.~Mossel and A.~Sly.
\newblock Exact thresholds for {I}sing--{G}ibbs samplers on general graphs.
\newblock \emph{The Annals of Probability}, 41\penalty0 (1):\penalty0 294--328,
  2013.

\bibitem[Neal(1993)]{neal_1993}
R.~M. Neal.
\newblock Bayesian learning via stochastic dynamics.
\newblock \emph{Advances in neural information processing systems}, 1993.

\bibitem[Owen(2019)]{Owen2019mcbook}
A.~B. Owen.
\newblock \emph{Monte Carlo theory, methods and examples}.
\newblock 2019.

\bibitem[Peyr\'e and Cuturi(2019)]{PeyreCuturi2019OptTransport}
G.~Peyr\'e and M.~Cuturi.
\newblock \emph{Computational Optimal Transport}.
\newblock 2019.
\newblock arXiv preprint ArXiv:1803.00567v3.

\bibitem[Polson et~al.(2013)Polson, Scott, and
  Windle]{polson_scott_windle_2013}
N.~G. Polson, J.~G. Scott, and J.~Windle.
\newblock Bayesian inference for logistic models using {P}olya-{G}amma latent
  variables.
\newblock \emph{Journal of the American Statistical Association}, 108\penalty0
  (504):\penalty0 1339--1349, 2013.
\newblock \doi{10.1080/01621459.2013.829001}.
\newblock URL \url{https://doi.org/10.1080/01621459.2013.829001}.

\bibitem[Propp and Wilson(1996)]{propp1996exact}
J.~G. Propp and D.~B. Wilson.
\newblock Exact sampling with coupled {M}arkov chains and applications to
  statistical mechanics.
\newblock \emph{Random Structures \& Algorithms}, 9\penalty0 (1-2):\penalty0
  223--252, 1996.

\bibitem[{R Core Team}(2013)]{Rsoftware}
{R Core Team}.
\newblock \emph{R: A Language and Environment for Statistical Computing}.
\newblock R Foundation for Statistical Computing, Vienna, Austria, 2013.
\newblock URL \url{http://www.R-project.org/}.

\bibitem[Robert and Casella(2013)]{robert_casella_2013}
C.~P. Robert and G.~Casella.
\newblock \emph{Monte Carlo Statistical Methods}.
\newblock Spinger New York, 2013.

\bibitem[Roberts and Rosenthal(2001)]{roberts2001}
G.~O. Roberts and J.~S. Rosenthal.
\newblock Optimal scaling for various {M}etropolis--{H}astings algorithms.
\newblock \emph{Statistical Science}, 16\penalty0 (4):\penalty0 351--367, 11
  2001.
\newblock \doi{10.1214/ss/1015346320}.
\newblock URL \url{https://doi.org/10.1214/ss/1015346320}.

\bibitem[Roberts and Rosenthal(2004)]{roberts_rosenthal_2004}
G.~O. Roberts and J.~S. Rosenthal.
\newblock General state space {M}arkov chains and {MCMC} algorithms.
\newblock \emph{Probability Surveys}, \penalty0 (1):\penalty0 20--71, 2004.

\bibitem[Rosenthal(1996)]{rosenthal1996analysis}
J.~S. Rosenthal.
\newblock Analysis of the {G}ibbs sampler for a model related to
  {J}ames--{S}tein estimators.
\newblock \emph{Statistics and Computing}, 6\penalty0 (3):\penalty0 269--275,
  1996.

\bibitem[Rudolf and Schweizer(2018)]{rudolf2018perturbation}
D.~Rudolf and N.~Schweizer.
\newblock Perturbation theory for {M}arkov chains via {W}asserstein distance.
\newblock \emph{Bernoulli}, 24\penalty0 (4A):\penalty0 2610--2639, 2018.

\bibitem[Salvatier et~al.(2016)Salvatier, Wiecki, and
  Fonnesbeck]{Salvatier2016PyMC}
J.~Salvatier, T.~Wiecki, and C.~Fonnesbeck.
\newblock Probabilistic programming in {P}ython using {PyMC3}.
\newblock \emph{PeerJ Computer Science}, 2\penalty0 (55), 2016.

\bibitem[Sriperumbudur et~al.(2012)Sriperumbudur, Fukumizu, Gretton,
  Sch{\"o}lkopf, and Lanckriet]{sriperumbudur2012empirical}
B.~K. Sriperumbudur, K.~Fukumizu, A.~Gretton, B.~Sch{\"o}lkopf, and G.~R.
  Lanckriet.
\newblock On the empirical estimation of integral probability metrics.
\newblock \emph{Electronic Journal of Statistics}, 6:\penalty0 1550--1599,
  2012.

\bibitem[Syed et~al.(2019)Syed, Bouchard-C{\^o}t{\'e}, Deligiannidis, and
  Doucet]{syed2019non}
S.~Syed, A.~Bouchard-C{\^o}t{\'e}, G.~Deligiannidis, and A.~Doucet.
\newblock Non-reversible parallel tempering: an embarassingly parallel {MCMC}
  scheme.
\newblock \emph{arXiv preprint arXiv:1905.02939}, 2019.

\bibitem[Thorisson(1986)]{thorisson1986maximal}
H.~Thorisson.
\newblock On maximal and distributional coupling.
\newblock \emph{The Annals of Probability}, pages 873--876, 1986.

\bibitem[Thorisson(2000)]{thorisson_2000}
H.~Thorisson.
\newblock \emph{Coupling, stationarity, and regeneration.}
\newblock Springer New York, 2000.

\bibitem[Titsias and Yau(2017)]{titsias2017hamming}
M.~K. Titsias and C.~Yau.
\newblock The {H}amming ball sampler.
\newblock \emph{Journal of the American Statistical Association}, 112\penalty0
  (520):\penalty0 1598--1611, 2017.

\bibitem[Vats et~al.(2019)Vats, Flegal, and Jones]{vats_flegal_jones_2019}
D.~Vats, J.~M. Flegal, and G.~L. Jones.
\newblock Multivariate output analysis for {M}arkov chain {M}onte {C}arlo.
\newblock \emph{Biometrika}, 106\penalty0 (2):\penalty0 321--337, 04 2019.

\bibitem[Vihola(2017)]{vihola2017unbiased}
M.~Vihola.
\newblock Unbiased estimators and multilevel {M}onte {C}arlo.
\newblock \emph{Operations Research}, 66\penalty0 (2):\penalty0 448--462, 2017.

\bibitem[Whiteley(2012)]{whiteley2012sequential}
N.~Whiteley.
\newblock Sequential {M}onte {C}arlo samplers: error bounds and insensitivity
  to initial conditions.
\newblock \emph{Stochastic Analysis and Applications}, 30\penalty0
  (5):\penalty0 774--798, 2012.

\bibitem[Wickham(2016)]{wickham2016ggplot2}
H.~Wickham.
\newblock \emph{ggplot2: elegant graphics for data analysis}.
\newblock Springer, 2016.

\bibitem[Wilke(2017)]{wilke2017ggridges}
C.~O. Wilke.
\newblock ggridges: Ridgeline plots in `ggplot2’.
\newblock \emph{R package version 0.4}, 1, 2017.

\bibitem[Woodard et~al.(2009)Woodard, Schmidler, and Huber]{woodard2009}
D.~B. Woodard, S.~C. Schmidler, and M.~Huber.
\newblock Conditions for rapid mixing of parallel and simulated tempering on
  multimodal distributions.
\newblock \emph{Ann. Appl. Probab.}, 19\penalty0 (2):\penalty0 617--640, 04
  2009.
\newblock \doi{10.1214/08-AAP555}.
\newblock URL \url{https://doi.org/10.1214/08-AAP555}.

\bibitem[Zanella(2019)]{zanella2019informed}
G.~Zanella.
\newblock Informed proposals for local {MCMC} in discrete spaces.
\newblock \emph{Journal of the American Statistical Association}, 2019.

\end{thebibliography}

\appendix

\section{Proofs}
\subsection{\texorpdfstring{$L$}{L}-lag unbiased estimators}
Our motivation for Theorem \ref{ipm_upper_bound} comes from recent works on unbiased MCMC estimators using couplings \citep{jacob_2019,glynn_2014}. In particular, extending the unbiased estimator from
\citet{jacob_2019} that corresponds to a lag $L=1$, we first construct the \( L
\)-lag estimator with an arbitrary $L \geq 1$ as 
\begin{equation} \label{eq:l_lag_estimate}
H^{(L)}_t (X,Y) := h(X_t) + \sum_{j=1}^{ \bigl\lceil \frac{\tau^{(L)} - L -t}{L} \bigr\rceil } h(X_{t+jL}) - h(Y_{t+(j-1)L}).
\end{equation}
where $h\in \mathcal{H}$, chains \((X_t)_{t \geq 0}\), \((Y_t)_{t \geq 0}\) marginally have the same
initial distribution \(\pi_0\) and Markov transition kernel \(K\) on
\((\mathbb{R}^d, \mathcal{B}(\mathbb{R}^d))\) with invariant distribution \(\pi\), and they are jointly following the \(L\)-lag coupling algorithm (Algorithm \ref{main_algo_coupling} in the main paper). As an aside, following \citet{jacob_2019} we also include the corresponding time-averaged \( L\)-lag estimator:
\begin{flalign} \label{eq:l_lag_time_average_estimate}
H^{(L)}_{k:m} (X,Y) &:= \frac{1}{m-k+1} \sum_{t=k}^m H^{(L)}_t (X,Y) \\
&= \frac{1}{m-k+1} \sum_{t=k}^m h(X_t) + \frac{1}{m-k+1} \sum_{t=k}^m \sum_{j=1}^{ \bigl\lceil \frac{\tau^{(L)} - L -t}{L} \bigr\rceil } h(X_{t+jL}) - h(Y_{t+(j-1)L}).
\end{flalign}

Following the proof technique for the \(1\)-lag estimator in \citet{jacob_2019}, we first prove an unbiasedness result for \( H^{(L)}_t (X,Y) \). By linearity the unbiasedness of \(H^{(L)}_{k:m} (X,Y)\) follows. 
\begin{Prop} \label{prop:l_lag_unbiased}
    Under the Assumptions \ref{assumption_1}, \ref{assumption_2} and \ref{assumption_3} of the main article, 
\( H^{(L)}_t (X,Y) \) has expectation \( \mathbb{E}_{X \sim \pi}[h(X)] \), finite variance, and finite expected computing time. 
\end{Prop}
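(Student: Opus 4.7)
The plan is to follow the unbiasedness argument for the $L=1$ case in \citet{jacob_2019} and extend it to general lag $L$, using a telescoping identity together with a dominated-convergence step justified by the moment bound in Assumption \ref{assumption_1} and the geometric-tail bound in Assumption \ref{assumption_2}.

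First I would observe that thanks to Assumptions \ref{assumption_2} and \ref{assumption_3}, the upper summation index $\lceil (\tau^{(L)}-L-t)/L\rceil$ is a.s.\ finite and the telescoping sum may be extended to $j=\infty$ because, once $t+(j-1)L \geq \tau^{(L)}$, the faithfulness assumption gives $X_{t+jL}=Y_{t+(j-1)L}$ so the additional terms vanish. For each finite $n$, using that $X_s$ and $Y_s$ share the marginal distribution $\pi_s$, the partial sum
\begin{equation*}
\mathbb{E}\Big[ h(X_t) + \sum_{j=1}^{n}\big(h(X_{t+jL})-h(Y_{t+(j-1)L})\big)\Big] = \mathbb{E}[h(X_{t+nL})]
\end{equation*}
telescopes cleanly, and by Assumption \ref{assumption_1} the right-hand side converges to $\mathbb{E}_{X\sim\pi}[h(X)]$ as $n\to\infty$. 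Unbiasedness therefore reduces to swapping limit and expectation.

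The main technical step, and the one I expect to be the main obstacle, is to justify this swap by dominating the tail. I would bound $|h(X_{t+jL})-h(Y_{t+(j-1)L})| \leq M_{\mathcal{H}}(X_{t+jL},Y_{t+(j-1)L})$ for $h\in\mathcal{H}$, and then, writing $N_t:=\lceil (\tau^{(L)}-L-t)/L\rceil$, control the dominating series
\begin{equation*}
S := \sum_{j=1}^{N_t} M_{\mathcal{H}}(X_{t+jL},Y_{t+(j-1)L}).
\end{equation*}
Applying Hölder's inequality with conjugate exponents $p=(2+\eta)$ and $q=p/(p-1)$ gives
\begin{equation*}
\mathbb{E}[S] \leq \sum_{j=1}^{\infty} \mathbb{E}\big[M_{\mathcal{H}}(X_{t+jL},Y_{t+(j-1)L})^{2+\eta}\big]^{1/(2+\eta)} \mathbb{P}(N_t \geq j)^{1-1/(2+\eta)},
\end{equation*}
where the moment factors are uniformly bounded by $D^{1/(2+\eta)}$ (Assumption \ref{assumption_1}) and the tail probabilities $\mathbb{P}(N_t\geq j)\leq C\delta^{j-1}$ decay geometrically in $j$ (Assumption \ref{assumption_2}). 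Hence the series is summable, the dominated convergence theorem applies to the partial sums, and we conclude $\mathbb{E}[H^{(L)}_t(X,Y)]=\mathbb{E}_{X\sim\pi}[h(X)]$.

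The remaining two claims are corollaries of the same estimate. For finite variance, expand $|H^{(L)}_t|^2\leq 2|h(X_t)|^2 + 2 S^2$, bound $|h(X_t)|^2$ via the $(2+\eta)$-moment of $M_{\mathcal{H}}$ (any $h\in\mathcal{H}$ is locally controlled by $M_{\mathcal{H}}$ up to a constant by applying the IPM function bound to a fixed reference point), and bound $\mathbb{E}[S^2]$ by the same Hölder argument with exponent $(2+\eta)/2$ on $M_{\mathcal{H}}^2$ against the geometric tail of $N_t$; summability still holds because $\delta<1$. For finite expected computing time, the cost of sampling $H^{(L)}_t$ is at most an affine function of $\max(t,\tau^{(L)})$, and $\mathbb{E}[\tau^{(L)}]<\infty$ follows directly from the sub-exponential tail bound in Assumption \ref{assumption_2}.
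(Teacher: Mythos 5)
Your proposal is correct and follows essentially the same route as the paper's proof: a telescoping identity reduces unbiasedness to exchanging limit and expectation, which is justified by H\"older's inequality pairing the uniform $(2+\eta)$-moment bound on $M_{\mathcal{H}}$ from Assumption \ref{assumption_1} against the geometric tail of $\tau^{(L)}$ from Assumption \ref{assumption_2}; the only cosmetic difference is that the paper packages the $L_1$ and $L_2$ controls together by showing the partial sums form a Cauchy sequence in $L_2$, whereas you treat the dominating series and the variance bound as two separate (but identical in substance) H\"older estimates.
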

\begin{proof}
    The proof is nearly identical to those in \citet{vihola2017unbiased,glynn_2014,jacob_2019} and related articles, and is only reproduced here for completeness.
Let $t=0$ without loss of generality. Otherwise start the chains at \( \pi_t \) rather than \( \pi_0 \). Secondly, we can focus on the component-wise behaviour of \( H^{(L)}_0 (X,Y) \) and assume \( h \) takes values in \(  \mathbb{R} \) without loss of generality. For simplicity of notation we drop the \( (L) \) superscript and write \( H_0 (X,Y) \) to denote  \( H^{(L)}_0 (X,Y) \). 

Define \( \Delta_0 = h(X_0) \), \( \Delta_j = h(X_{jL}) - h(Y_{(j-1)L}) \) for \( j\geq1\), and \( H^{n}_0 (X,Y) := \sum_{j=0}^n \Delta_j \). By Assumption \ref{assumption_2}, \( \mathbb{E}[\tau^{(L)}]< \infty \), so the computation time has finite expectation. When \( (1+j)L \geq \tau^{(L)}, \) \( \Delta_j = 0 \) by faithfulness (Assumption \ref{assumption_3}). As \(\tau^{(L)} \overset{a.s.}{<} \infty \), this implies \(H^{n}_0 (X,Y) \overset{a.s.}{\rightarrow} H_0 (X,Y) \) as \( n \rightarrow \infty \).

We now show that \( (H^{n}_0 (X,Y))_{n \geq 0} \) is a Cauchy sequence in \( L_2 \), the space of random variable with finite first two moments, by showing 
\[ \underset{n' \geq n}{\sup} \mathbb{E}[ \big( H_0^{n'}(X,Y) - H_0^{n}(X,Y) \big)^2 ] \underset{n \rightarrow \infty}{\rightarrow} 0. \]
This follows by direct calculation. Firstly by Cauchy--Schwarz,
\begin{flalign*}
\mathbb{E}[\big( H_0^{n'}(X,Y) - H_0^{n}(X,Y) \big)^2] = \sum_{s=n+1}^{n'} \sum_{t=n+1}^{n'} \mathbb{E}[\Delta_s \Delta_t]  \leq \sum_{s=n+1}^{n'} \sum_{t=n+1}^{n'} \mathbb{E}[\Delta_s^2]^{1/2} \mathbb{E}[\Delta_t^2]^{1/2}.
\end{flalign*}
By H\"older's inequality with \( p = 1 + \eta/2 \), \( q = (2+\eta)/\eta \) and Assumptions \ref{assumption_1} - \ref{assumption_2}, for any \( \eta > 0 \), 
\begin{flalign*}
\mathbb{E}[\Delta_t^2] = \mathbb{E}[\Delta_t^2 \textbf{1}(\tau^{(L)} > (1+t)L)] \leq \mathbb{E}[\Delta_t^{2+ \eta} ]^{\frac{1}{1+\eta/2}} \mathbb{E}[\textbf{1}(\tau^{(L)} > (1+t)L)]^{\frac{\eta}{2+\eta}} \\
< D^{\frac{1}{1+\eta/2}} (C \delta^{t})^{\frac{\eta}{2+\eta}}.
\end{flalign*}
where \( \mathbb{E}[\Delta_t^{2+ \eta} ] \leq \mathbb{E}[M_{\mathcal{H}}(X_{tL}, Y_{(t-1)L})^{2+ \eta} ] \leq D \) follows from Assumptions \ref{assumption_1}. 
Overall this implies \( \mathbb{E}[\big( H_0^{n'}(X,Y) - H_0^{n}(X,Y) \big)^2] \leq \tilde{C} \tilde{\delta}^n \) for some \(\tilde{C}>0, \tilde{\delta} \in (0,1) \) for all \( n \geq 0 \). Hence  \( (H^{n}_0 (X,Y))_{n \geq 0} \) is a Cauchy sequence in \( L_2 \), and has finite first and second moments. Recall that Cauchy sequences are bounded, so we can apply the dominated convergence theorem to get, 
\[ \mathbb{E}[H_0(X,Y)] = \mathbb{E}[ \underset{n \rightarrow \infty}{\lim} H^n_0(X,Y)] = \underset{n \rightarrow \infty}{\lim} \mathbb{E}[H^n_0(X,Y)]. \]
Finally, note that by a telescoping sum argument and Assumption \ref{assumption_1}, 
\[ \underset{n \rightarrow \infty}{\lim} \mathbb{E}[H^n_0(X,Y)] = \underset{n \rightarrow \infty}{\lim} \mathbb{E}[h(X_n)] =\mathbb{E}_{X \sim P}[h(X)]. \]
as required. Therefore, in general \( H^{(L)}_t (X,Y) \) has expectation \( \mathbb{E}_{X \sim \pi}[h(X)] \), finite variance, and a finite expected computing time.
\end{proof}

\subsection{Proof of Theorem \ref{ipm_upper_bound}}

\begin{proof} 
    We consider the \( L \)-lag estimate in \eqref{eq:l_lag_estimate}.
    Under Assumptions \ref{assumption_1}, \ref{assumption_2} and
    \ref{assumption_3}, by Proposition \ref{prop:l_lag_unbiased} \( H^{(L)}_t
    (X,Y) \) is an unbiased estimator of \( \mathbb{E}_{X \sim \pi}[ h(X) ] \),
    for any $h\in \mathcal{H}$. Then, 
\begin{flalign*}
d_{\mathcal{H}}(\pi_t, \pi) &= \sup\limits_{h \in \mathcal{H}} | \mathbb{E}_{X \sim \pi}[h(X)] - \mathbb{E}[h(X_t)] | \\
&= \sup\limits_{h \in \mathcal{H}} \Big| \mathbb{E} \Big[ \sum_{j=1}^{ \bigl\lceil \frac{\tau^{(L)} - L -t}{L} \bigr\rceil } h(X_{t+jL}) - h(Y_{t+(j-1)L}) \Big] \Big| \\
&\leq \mathbb{E} \Bigg[ \sum_{j=1}^{ \bigl\lceil \frac{\tau^{(L)} - L -t}{L} \bigr\rceil } M_{\mathcal{H}}(X_{t+jL}, Y_{t+(j-1)L}) \Bigg].
\end{flalign*}
The inequality above stems from 1) the triangle inequality applied $\lceil (\tau^{(L)} - L -t)/L \rceil$ times, and 2) the bound $|h(x)-h(y)|\leq M_\mathcal{H}(x,y)$ assumed in the main article.
We see that increasing the lag $L$ reduces the number of applications of the triangle inequality performed above, which explains the benefits of increasing $L$.
\end{proof}

\subsection{Bias of Sequential Monte Carlo samplers}

    For an SMC sampler \citep{delmoral2006smc} with \(N\) particles targeting \(\pi\), 
let $(w^n,\xi^n)_{n=1}^N$ be the particle approximation of $\pi$,
so that weighted averages $\sum_{n=1}^N w^n h(\xi^n)$ are consistent approximations of $\int h(x)\pi(dx)$ as $N\to\infty$
under some assumptions, e.g. \citet{whiteley2012sequential}.
We consider a particle $\xi$ drawn among $(\xi^n)_{n=1}^N$ with probabilities $(w^n)_{n=1}^N$,
and we denote by $q^{(N)}$ the marginal distribution of $\xi$. Our goal is to formulate an upper bound on the total variation
distance between $q^{(N)}$ and $\pi$ for fixed $N$, which is a way of studying the non-asymptotic bias of SMC samplers.

To use the proposed machinery, we embed the SMC sampler in an MCMC algorithm, following \citet{andrieu2010particle}.
The particle independent MH (PIMH) algorithm operates as follows. Initially an SMC sampler is run, 
from which a particle $\xi_0$ is drawn (marginally from $q^{(N)}$), as well as a normalizing constant estimator $\hat{Z}_{0}$ \citep{delmoral2006smc}.
We can think of the state of the chain as the pair $(\xi_{0},\hat{Z}_{0})$.
At each iteration $t\geq 1$, a new SMC sampler is run and generates $(\xi^\star, \hat{Z}^{\star})$. With probability $\min(1,\hat{Z}^\star/\hat{Z}_{t-1})$,
the new state of the chain is set to $(\xi^\star, \hat{Z}^\star)$, otherwise it remains at $(\xi_{t-1}, \hat{Z}_{t-1})$.
It is shown in \citet{andrieu2010particle} that this algorithm corresponds to 
a standard Metropolis--Hastings algorithm with independent proposals upon introducing some auxiliary variables.
Therefore under some conditions, the generated chain is such that $\xi_{t}$ goes to $\pi$ as $t\to \infty$.
We assume throughout that our three assumptions hold, which corresponds to assumptions on the performance
of the SMC sampler in the present setting.

Next consider an \( L \)-lag coupling of such a PIMH algorithm as proposed in
\citet{pmlr-v89-middleton19a} and described in Algorithm \ref{algo:pimh_coupling}. In
this setting, we can characterize the distribution of the coupling time. In
particular, 
\begin{equation} \label{eq:imh_geo:supp}
    \tau^{(L)} - (L-1) \big| \hat{Z}_{L-1} \sim \text{Geometric}( \alpha(\hat{Z}_{L-1}) ),
\end{equation}
where the Geometric distribution is parameterized to take integers values
greater than or equal to 1, and \( \alpha(\hat{Z}) := \mathbb{E} \big[\min(1,
\hat{Z}^\star/\hat{Z}) \big| \hat{Z} \big]\) is the acceptance probability
of the PIMH chain from a state with normalizing constant estimate \(\hat{Z}\).
Using a monotonicity property of IMH 
\citep{corcoran2002perfectimh}, \citet[Proposition 8]{pmlr-v89-middleton19a}
presents this result for \(1\)-Lag couplings of PIMH, and \eqref{eq:imh_geo:supp} is
a simple generalization 
to \(L\)-lag couplings; we refer to \citet{pmlr-v89-middleton19a} for the explicit assumptions being made.
Assuming that Theorem \ref{ipm_upper_bound}
applies, we consider the initial time $t=0$ and obtain
\begin{flalign*}
    d_{TV}(q^{(N)}, \pi) &\leq \mathbb{E}\Big[ \Bigl\lceil \frac{\tau^{(L)} - L }{L} \Bigr\rceil \Big] \\
&= \mathbb{E}\Big[ \mathbb{E}\Big[ \Bigl\lceil \frac{\tau^{(L)} - (L - 1) - 1}{L} \Bigr\rceil \Big| \hat{Z}_{L-1} \Big] \Big] \\
&= \mathbb{E}\Big[ \frac{1-\alpha(\hat{Z}_{L-1})}{1- (1-\alpha(\hat{Z}_{L-1}))^{L}} \Big],
\end{flalign*}
as required. 
Note that in the first inequality we used the fact that the total variation distance between some marginals of two multivariate distributions
is less than the total variation distance between the joint distributions.
The final equality follows from noting that for \( G \sim \text{Geometric}(p)\) and integers \(m \geq 0, n > 0 \), 
\begin{flalign*}
\mathbb{E}\big[ \Bigl\lceil \frac{G-m}{n} \Bigr\rceil \big] &= \sum_{k=0}^\infty \mathbb{P} \Big(\Bigl\lceil \frac{G-m}{n} \Bigr\rceil > k \Big) = \sum_{k=0}^\infty \mathbb{P} \Big(\frac{G-m}{n} > k \Big) = \frac{(1-p)^m}{1-(1-p)^n}.
\end{flalign*}

\section{Couplings of MCMC algorithms}
In this section, all the algorithms used in our examples are
presented. These are constructions used in recent work on unbiased MCMC
estimation with couplings, e.g. \citet{jacob_2019, heng_2018,pmlr-v89-middleton19a}. 
All scripts in R are available at \url{https://github.com/niloyb/LlagCouplings}. 

We first describe algorithms to sample from maximal couplings. We then describe
algorithms to sample meeting times corresponding to various couplings of MCMC algorithms.

\paragraph{Maximal Couplings.} To construct \(L\)-lag couplings, the pair of
chains needs to meet exactly whilst preserving their respective marginal
distributions. This can be achieved using \textit{maximal coupling}
\citep{valen_johnson_1998,thorisson_2000}, which we present below in Algorithm
\ref{rejection_max_coupling}. Given variables $X \sim P$, $Y\sim Q$,  Algorithm
\ref{rejection_max_coupling} samples jointly from $(X,Y)$ such that the marginal
distributions of $X$ and $Y$ are preserved and $X$ equals $Y$ with maximal probability. It
requires sampling from the distributions of $X$ and $Y$ and evaluating the ratio
of their probability density functions. Below $P$ and $Q$ denote distributions of
$X$ and $Y$; $p$ and $q$ denote the respective probability density functions.
\begin{algorithm}[!htb]
\DontPrintSemicolon
\caption{A maximal coupling of \(P\) and \(Q\)}
Sample \(X \sim P\), and \(W \sim \mathcal{U}(0,1)\)\;
\lIf{ \( p(X) W \leq q(X) \) }{
   set \(Y=X\) and \Return \( (X,Y) \)
   } \lElse {
   sample \( Y^* \sim q\) and \(W^* \sim \mathcal{U}(0,1) \) until  \( q(Y^*) W^* > p(Y^*) \). Set \(Y=Y^*\) and \Return \( (X,Y) \)
   }
\label{rejection_max_coupling}
\end{algorithm}

For the particular case when \(P = \mathcal{N}(\mu_1, \Sigma), Q =
\mathcal{N}(\mu_2, \Sigma)\), we can use a \textit{reflection-maximal coupling}
\citep{jacob_2019,bou2018coupling} which has deterministic
computational cost. This also samples jointly from $(X,Y)$ such that the
marginal distributions of $X,Y$ are preserved and $X$ equals $Y$ with maximal
probability. This is given in Algorithm \ref{max_reflection_coupling} below,
where \(s\) denotes the probability density function of a \(d\)-dimensional
standard Normal. Note that in the case \(\dot{Y}=\dot{X}+z\) below, we get an
event \(\{X=Y\}\) as required.
\begin{algorithm}[!htb] \label{max_reflection_coupling}
\DontPrintSemicolon
 \caption{A reflection-maximal coupling of $\mathcal{N}(\mu_1, \Sigma)$ and $\mathcal{N}(\mu_2, \Sigma)$}
 Let \(z= \Sigma^{-1/2}(\mu_1-\mu_2) \) and \( e=z/ \|z\| \). Sample $\dot{X} \sim \mathcal{N}(0_d, \textbf{I}_d)$, and $W \sim \mathcal{U}(0,1)$ \;
\lIf{ $ s(\dot{X}) W \leq s(\dot{X} + z) $ }{ 
	Set $\dot{Y}=\dot{X} + z $
	} \lElse {
	Set $\dot{Y}=\dot{X} -2(e^T\dot{X})e $
	}
Set \( X = \Sigma^{1/2}\dot{X} + \mu_1, Y = \Sigma^{1/2}\dot{Y} + \mu_2 \), and \Return $(X,Y)$
\end{algorithm}

When random variables \(X, Y \) have discrete distributions \( P=(p_1,\ldots,p_N), Q=(q_1,\ldots, q_N) \) on a finite state space, we can perform a maximal coupling with deterministic computation cost. This is given in Algorithm \ref{max_coupling_discrete}. First, we define $C = (c_1, \dots, c_N)$ as $c_n = (p_n \wedge q_n) / S$ for
$n\in \{1,\ldots,N\}$ with $S = \sum_{n=1}^N (p_n \wedge q_n)$.  The notation $a \wedge b$
stands for the minimum of $a$ and $b$. We then define
$P'$ and $Q'$ as $p'_n = (p_n - p_n \wedge q_n) / (1-S)$, and $q'_n = (q_n -
p_n \wedge q_n)/(1-S)$. These $P'$ and $Q'$ are probability vectors 
and computing them takes $\mathcal{O}(N)$ operations. Note that the total variation
distance between $P$ and $Q$ is equal to $1-S$, and that $P'$  and $Q'$ have disjoint supports.

\begin{algorithm}[!htb] \label{max_coupling_discrete}
\DontPrintSemicolon
\caption{A maximal coupling of \( P=(p_1,\ldots,p_N), Q=(q_1,\ldots, q_N) \)}
Sample \(U \sim \mathcal{U}(0,1) \) \;
\lIf{ \(U < S\) } { Sample \(X\) from \(C\), define \(Y=X\) and \Return $(X,Y)$ } 
\lElse { Sample \(X\) from \(P'\), \(Y\) from \(Q'\) independently,  and \Return $(X,Y)$ } 
\end{algorithm}

\subsection{ Random walk Metropolis--Hastings }
We couple a pair of random walk Metropolis--Hastings chains in Sections
\ref{section:standard_normal} and \ref{section:bimodal} using Algorithm
\ref{algo:rwmh_coupling} with step sizes \( \sigma_{\text{MH}} = 0.5 \) and \( \sigma_{\text{MH}} = 1 \) respectively.
We could also modify the algorithm to use more general proposal kernels \( q(\cdot, \cdot) \), provided that we can sample
from a maximal coupling of $q(x,\cdot)$ and $q(y,\cdot)$ for any pair $x,y$. 

\begin{algorithm}[!htb] \caption{ Gaussian random walk Metropolis--Hastings}
\DontPrintSemicolon
    \textbf{Input:} lag \( L \geq 1\), random walk step size \( \sigma_{\text{MH}} \) \;
\textbf{Output:} meeting time \( \tau^{(L)} \); chains \( (X_t)_{0 \leq t \leq \tau^{(L)} }, (Y_t)_{0 \leq t \leq \tau^{(L)}-L } \) \;
Initialize:  generate \(X_0 \sim \pi_0\) and \(Y_0 \sim \pi_0\) \;
 \For{\( t =1,\dots, L \)}{
     Sample proposal \(X^* \sim \mathcal{N}(X_{t-1}, \sigma^2_{\text{MH}}) \) \; 
     Sample \(U \sim \mathcal{U}(0,1) \) \;
\textbf{if} \(U \leq \frac{\pi(X^*)}{\pi(X_{t-1})} \), \textbf{then} set \( X_t = X^* \) ; \textbf{else} set \( X_t = X_{t-1} \)
}
 \For{\( t > L \)}{
     Sample proposals \(X^* \sim \mathcal{N}(X_{t-1}, \sigma^2_{\text{MH}}) \), \(Y^* \sim \mathcal{N}(Y_{t-1-L}, \sigma^2_{\text{MH}}) \) jointly using maximal (or reflection-maximal) coupling \; 
     Sample \(U \sim \mathcal{U}(0,1) \) \;
\textbf{if} \(U \leq \frac{\pi(X^*)}{\pi(X_{t-1})} \), \textbf{then} set \( X_t = X^* \) ; \textbf{else} set \( X_t = X_{t-1} \) \;
\textbf{if} \(U \leq \frac{\pi(Y^*)}{\pi(Y_{t-1-L})} \), \textbf{then} set \( Y_{t-L} = Y^* \) ; \textbf{else} set \(  Y_{t-L} =  Y_{t-1-L} \) \;
\smallskip
\lIf{ \( X_{t} = Y_{t-L} \) }{
   \Return \( \tau^{(L)} := t \), and the chains \( (X_t)_{0 \leq t \leq \tau^{(L)} },  (Y_t)_{0 \leq t \leq \tau^{(L)}-L } \)
   }
}
 \label{algo:rwmh_coupling}
\end{algorithm}

\subsection{ MCMC algorithms for the Ising model }
\paragraph{Single site Gibbs (SSG).} 
Our implementation of single site Gibbs (SSG) scans all the sites of the lattice systematically.
We recall that the full conditionals of the Gibbs sampling updates are Bernoulli distributed; we denote by $p(\beta,X_{-i})$
the conditional probability of site $X_i$ being equal to $+1$ given the other sites. The algorithm to sample meeting times is given in Algorithm \ref{algo:ssg}.
The SSG results in Section \ref{section:burnin} are generated using Algorithm \ref{algo:ssg} with \(\beta = 0.46\). 

\begin{algorithm}[!htb] \caption{Single Site Gibbs sampler for the Ising model }
\DontPrintSemicolon
\textbf{Input:} lag \( L \geq 1\), and inverse temperature \(\beta\)  \;
\textbf{Output:} meeting time \( \tau^{(L)} \); chains \( (X_t)_{0 \leq t \leq \tau^{(L)} }, (Y_t)_{0 \leq t \leq \tau^{(L)}-L } \) \;
Initialize:  generate \(X_0 \sim \pi_0\) and \(Y_0 \sim \pi_0\) \;
 \For{\( t =1,\dots, L \)}{
     \For{site \(i=1,\dots, 32 \times 32\)}
{
Sample \( X_{i, t} | X_{-i, t} \sim \Bernoulli(p(\beta, X_{-i, t})) \)
}
}
\For{\( t > L \)}
{
\For{site \(i=1,\dots,32 \times 32\)}
{
Sample \( X_{i, t} | X_{-i, t} \sim \Bernoulli(p(\beta, X_{-i,t})) \) and 
\( Y_{i, t-L} | Y_{-i, t-L} \sim \Bernoulli(p(\beta,Y_{-i, t-L})) \) jointly using e.g.\ Algorithm \ref{max_coupling_discrete} \; 
}
\smallskip
\lIf{ \( X_{t} = Y_{t-L} \) }{
   \Return \( \tau^{(L)} := t \), and the chains \( (X_t)_{0 \leq t \leq \tau^{(L)} },  (Y_t)_{0 \leq t \leq \tau^{(L)}-L } \)
   }
}
\label{algo:ssg}
\end{algorithm}

\paragraph{Parallel tempering (PT).} For parallel tempering, we introduce $C$ chains denoted by $x^{(1)}$, \ldots, $x^{(C)}$.
Each chain $X^{(c)}$ targets the distribution $\pi_{\beta^{(c)}}$
where $(\beta^{(c)})_{c=1}^{C}$ are positive values interpreted as inverse temperatures.
In the example in Section \ref{section:burnin}, 
we have $C=12$, $\beta^{(1)} = 0.3$, $\beta^{(C)} = 0.46$, and the intermediate
$\beta^{(c)}$ are equispaced. The frequency of proposed swap moves is denoted by $\omega$ and set to $0.02$. This is in no way optimal, see \citet{syed2019non} for 
practical tuning strategies. Our implementation of a coupled PT algorithm is given below in Algorithm \ref{algo:pt}. 

\begin{algorithm}[!htb] \caption{ Parallel tempering for the Ising model }
\DontPrintSemicolon
\textbf{Input:} lag \( L \geq 1\), and inverse temperatures \((\beta^{(c)})_{c=1}^{C}\)  \;
\textbf{Output:} meeting time \( \tau^{(L)} \), chains \( (X^{(c)}_t)_{0 \leq t \leq \tau^{(L)} }, (Y^{(c)}_t)_{0 \leq t \leq \tau^{(L)}-L } \) for \(c=1,\ldots, C\) \;
Initialize:  generate \(X^{(c)}_0 \sim \pi_0\) and \(Y^{(c)}_0 \sim \pi_0\) for each chain \(c=1,\ldots, C\) \;
 \For{\( t =1,\dots, L \)}{
	 Sample \(U \sim \mathcal{U}(0,1)\) \;
\eIf { $U < \omega$ } { 
    Define $X_{t}^{(c)} = X_{t-1}^{(c)}$ for all $c=1,\ldots, C$\\
\For{ \( c=1,\ldots, C-1 \) } {
Swap chain states \( X_{t}^{(c)}, X_{t}^{(c+1)} \) with probability \( \min \Big( 1, \frac{\pi_{\beta^{(c)}}(X_{t}^{(c+1)})\pi_{\beta^{(c+1)}}(X_{t}^{(c)})}{\pi_{\beta^{(c)}}(X_{t}^{(c)})\pi_{\beta^{(c+1)}}(X_{t}^{(c+1)})} \Big) \)
}
 }{ \For{ \( c=1,\ldots, C \) } {
 Update \(X^{(c)}_t \sim SSG( X^{(c)}_{t-1} ; \beta^{(c)}) \)
} 
   }
}
\For{\( t > L \)}{
	Sample \(U \sim \mathcal{U}(0,1)\) \;
\eIf { $U < \omega$ } { 
    Define $X_{t}^{(c)} = X_{t-1}^{(c)}$ and $Y_{t-L}^{(c)} = Y_{t-L-1}^{(c)}$ for all $c$\\
\For{ \( c=1,\ldots, C-1 \) } {
	Sample \(U^{(c)} \sim \mathcal{U}(0,1)\) \;
\textbf{if} \( U^{(c)} \leq \frac{\pi_{\beta^{(c)}}(X_{t}^{(c+1)})\pi_{\beta^{(c+1)}}(X_{t}^{(c)})}{\pi_{\beta^{(c)}}(X_{t}^{(c)})\pi_{\beta^{(c+1)}}(X_{t}^{(c+1)})}  \), swap chain states \( X_{t}^{(c)}, X_{t}^{(c+1)} \) \;
\textbf{if} \( U^{(c)} \leq \frac{\pi_{\beta^{(c)}}(Y_{t-L}^{(c+1)})\pi_{\beta^{(c+1)}}(Y_{t-L}^{(c)})}{\pi_{\beta^{(c)}}(Y_{t-L}^{(c)})\pi_{\beta^{(c+1)}}(X_{t-L}^{(c+1)})}  \), swap chain states \( Y_{t-L}^{(c)}, Y_{t-L}^{(c+1)} \) \;
}
 }{ \For{ \( c=1,\ldots, C \) } {
     Update \(X^{(c)}_t \sim SSG( X^{(c)}_{t-1} ; \beta^{(c)}) \) and \(Y^{(c)}_{t-L} \sim SSG( Y^{(c)}_{t-L-1} ; \beta^{(c)}) \) \\
 jointly using coupled SSG (see Algorithm \ref{algo:ssg}) \;
} 
   }
   \If{ \( X_t^{(c)} = Y_{t-L}^{(c)} \quad \text{for }c=1,\ldots,C\)}{
       \Return \(\tau^{(L)} := t\), and the chains \( (X_t^{(c)})_{0 \leq t \leq \tau^{(L)} },  (Y_t^{(c)})_{0 \leq t \leq \tau^{(L)}-L }\) for all $c$.
   }
}
 \label{algo:pt}
\end{algorithm}

Note that in the case of parallel tempering, meetings occur when all the $C$ pairs of chains have met.
This incurs a trade-off: increasing the number of chains might improve the performance
of the marginal algorithm but could also complicate the occurrence of meetings; see \citet{syed2019non}
for other trade-offs associated with the number of chains in parallel tempering.

\subsection{ P\'olya-Gamma Gibbs sampler}

Algorithm \ref{algo:PG} couples the P\'olya-Gamma sampler for Bayesian logistic
regression \citep{polson_scott_windle_2013}, as in Section \ref{section:exact}
with prior \( \mathcal{N}(b, B) \) on \(\beta\) for \(b=0, B = 10 I_d\). 
Parameters \( \beta,
\tilde{\beta} \in \mathbb{R}^d, W, \tilde{W} \in \mathbb{R}_+^n \) correspond
to the vectors of regression coefficients and auxiliary variables respectively for the pair of chains.
The vector \( \tilde{y} \) is defined as \(\tilde{y} = y - {1}/{2} \), where $y$ is the vector of responses \(y \in \{0,1\}^n \). 

In the algorithm, $PG(1,c)$ refers to the P\'olya-Gamma variable in the notation of \citet{polson_scott_windle_2013}.
The notation $X|rest$ refers to the conditional distribution of $X$ given all the other variables.
The tilde notation refers to components of the second chain. The coupling here was also used in \citet{jacob_2019}.

\begin{algorithm}[!htb] \caption{ P\'olya-Gamma Gibbs Coupling } 
\DontPrintSemicolon
\textbf{Input:} lag \( L \geq 1\), response \(y \in \{0,1\}^n \) and design matrix \(X \in \mathbb{R}^{n \times d} \)   \;
\textbf{Output:} meeting time \( \tau^{(L)} \); chains \( (\beta_t)_{0 \leq t \leq \tau^{(L)} }, (\tilde{\beta}_t)_{0 \leq t \leq \tau^{(L)}-L } \) \;
Initialize:  generate \(\beta_0 \sim \pi_0\) and \(\tilde{\beta}_0 \sim \pi_0\) \;
 \For{\( t =1,\dots, L \)}{
Sample \( W_{t,i} | rest \sim PG(1, |x_i^T \beta_{t-1}|) \) for \(i=1,\dots,n\) \;
Sample \( \beta_t | rest \sim \mathcal{N}(\Sigma(W_t)(X^T\tilde{y} + B^{-1}b), \Sigma(W_t) ) \) \ for \ \( \Sigma(W_t) = (X^T \text{diag}(W_t) X +  B^{-1})^{-1} \) \;
}
 \For{\( t > L \)}{
Sample \(  W_{t,i} | rest \) and \(  \tilde{W}_{t-L,i} | \tilde{rest} \), jointly using maximal couplings
of $PG(1, |x_i^T \beta_{t-1}|)$ and $PG(1, |x_i^T \tilde{\beta}_{t-L-1}|)$,
for $i=1,\ldots,n$, 
by noting that the ratio of density functions of two P\'olya-Gamma random variables is tractable: \\
\[ \forall x > 0, \ \frac{PG(x;1,c_1)}{PG(x;1,c_2)} = \frac{\text{cosh}(c_2/2)}{\text{cosh}(c_1/2)} \exp \Big( - \Big( \frac{c_2^2}{2} - \frac{c_1^2}{2} \Big)x \Big) \]
Sample \( \beta_t | rest\) and \( \tilde{\beta}_{t-L} | \tilde{rest}\) from a maximal
coupling of \\
\[ \mathcal{N}(\Sigma(W_t)(X^T\tilde{y} + B^{-1}b), \Sigma(W_t) ) \text{ and } \mathcal{N}(\Sigma(\tilde{W}_{t-L})(X^T\tilde{y} + B^{-1}b), \Sigma(\tilde{W}_{t-L}) ) \]
\lIf{ \( \beta_{t} = \tilde{\beta}_{t-L} \) }{
   \Return \( \tau^{(L)} := t \), and the chains \( (\beta_t)_{0 \leq t \leq \tau^{(L)} },  (\tilde{\beta}_t)_{0 \leq t \leq \tau^{(L)}-L } \).
   }

}
\label{algo:PG}
\end{algorithm}

\subsection{ Hamiltonian Monte Carlo }

Algorithm \ref{algo:HMC} couples Hamiltonian Monte Carlo (HMC), as used in Section \ref{section:exact}. We follow the coupling construction from \citet{heng_2018};
see also references therein. For simplified notation, we will use \(
K_{p}(\beta, \cdot \ ; \ \epsilon_{\text{HMC}}, S_{\text{HMC}} ) \) to denote the leapfrog integration 
and the accept-reject part of HMC from position \(\beta \in
\mathbb{R}^d \) with momentum \(p \in \mathbb{R}^d \). 
Here \(\epsilon_{\text{HMC}}\) and \(S_{\text{HMC}}\) correspond to the step size and
the number of steps respectively in the leapfrog integration scheme.

We write
\(\bar{K}_{\text{RWMH}}((\beta,\tilde{\beta}), \cdot \ ; \ \sigma_{\text{MH}} ) \) to denote the kernel
of the coupled random walk 
Metropolis--Hastings algorithm (Algorithm \ref{algo:rwmh_coupling}) with step size \(\sigma_{\text{MH}}\). Mixture parameter
\(\gamma\) corresponds to the probability of selecting kernel \( \bar{K}_{\text{RWMH}}((\beta,\tilde{\beta}), \cdot \ ; \ \sigma_{\text{MH}} ) \)
from a mixture of the kernels \(K_{p}(\beta, \cdot \ ; \ \epsilon_{\text{HMC}}, S_{\text{HMC}} )\) and
\( \bar{K}_{\text{RWMH}}((\beta,\tilde{\beta}), \cdot \ ; \ \sigma_{\text{MH}} ) \).
The HMC results in Section \ref{section:exact} are generated using Algorithm \ref{algo:HMC} with 
\( \epsilon_{\text{HMC}} = 0.025, S_{\text{HMC}} = 4,5,6,7 \), \( \gamma = 0.05 \) and \(\sigma_{\text{MH}} = 0.001 \).

\begin{algorithm}[!htb] \caption{ Hamiltonian Monte Carlo}
\DontPrintSemicolon
\textbf{Input:} lag \( L \geq 1\), mixture parameter \(\gamma \in (0,1) \), and random walk step size \( \sigma_{\text{MH}} \)  \;
\textbf{Output:} meeting time \( \tau^{(L)} \); chains \( (\beta_t)_{0 \leq t \leq \tau^{(L)} }, (\tilde{\beta}_t)_{0 \leq t \leq \tau^{(L)}-L } \) \;
Initialize:  generate \(\beta_0 \sim \pi_0\) and \(\tilde{\beta}_0 \sim \pi_0\) \;
 \For{\( t =1,\dots, L \)}{
Sample momentum \( p^* \sim \mathcal{N}(0_d, \textbf{I}_d) \) and sample \( \beta_t \sim K_{p^*}(\beta_{t-1}, \cdot \ ; \ \epsilon_{\text{HMC}}, S_{\text{HMC}} ) \) \;
}
 \For{\( t > L \)}{
	 Sample \( U \sim \mathcal{U}(0,1) \) \;

\eIf{\( U \leq \gamma \)}{ 
    Sample \( \beta_t, \tilde{\beta}_{t-L} \sim \bar{K}_{\text{RWMH}}((\beta_{t-1},\tilde{\beta}_{t-L-1}), \cdot \ ; \ \sigma_{\text{MH}})  \) using Algorithm \ref{algo:rwmh_coupling} \;
 } {
 Sample common momentum \( p^* \sim \mathcal{N}(0_d, \textbf{I}_d) \) \;
 Sample \( \beta_t \sim K_{p^*}(\beta_{t-1}, \cdot \ ; \ \epsilon_{\text{HMC}}, S_{\text{HMC}} ) \) and \( \tilde{\beta}_{t-L} \sim K_{p^*}(\tilde{\beta}_{t-1-L}, \cdot \ ; \ \epsilon_{\text{HMC}}, S_{\text{HMC}} ) \) \;
 }
 \smallskip
\lIf{ \( \beta_{t} = \tilde{\beta}_{t-L} \) }{
    \Return \( \tau^{(L)} := t \), and the chains \( (\beta_t,W_t)_{0 \leq t \leq \tau^{(L)} },  (\tilde{\beta}_t, \tilde{W}_t)_{0 \leq t \leq \tau^{(L)}-L } \)
   }

}
\label{algo:HMC}
\end{algorithm}

Note that reflection-maximal coupling can also be used to draw the momenta in coupled Hamiltonian Monte Carlo,
as discussed in \citet{bou2018coupling,heng_2018}.

\subsection{ Metropolis--adjusted Langevin Algorithm }
The Metropolis--adjusted Langevin Algorithm (MALA) can be coupled as in random walk Metropolis--Hastings,
as it corresponds to a particular choice of proposal distribution.
For simplicity of notation we use 
\( q_{\sigma}(X, \cdot) \sim \mathcal{N}(X + \frac{1}{2}\sigma^2 \nabla \log \pi(X), \sigma^2 \textbf{I}_d ) \) to denote the Langevin proposal.
The MALA results in Section \ref{section:approx} are generated using Algorithm \ref{algo:mala_coupling} with 
\(\sigma = d^{-1/6}\) for \(d = 50, 100, 200, 300, 400, 500, 600, 800, 1000 \).

\begin{algorithm}[!htb] \caption{ MALA }
\DontPrintSemicolon
\textbf{Input:} lag \( L \geq 1\), random walk step size \( \sigma \) \;
\textbf{Output:} meeting time \( \tau^{(L)} \); chains \( (X_t)_{0 \leq t \leq \tau^{(L)} }, (Y_t)_{0 \leq t \leq \tau^{(L)}-L } \) \;
Initialize:  generate \(X_0 \sim \pi_0\) and \(Y_0 \sim \pi_0\) \;
 \For{\( t =1,\dots, L \)}{
Sample proposal \(X^* \sim q_{\sigma}(X_{t-1}, \cdot) \) \; 
Sample \(U \sim \mathcal{U}(0,1) \) \;
\textbf{if} \(U \leq \frac{\pi(X^*) q_{\sigma}(X^*, X_{t-1}) }{\pi(X_{t-1}) q_{\sigma}(X_{t-1}, X^*) } \), \textbf{then} set \( X_t = X^* \) ; \textbf{else} set \( X_t = X_{t-1} \)
}
 \For{\( t > L \)}{
Sample proposals \(X^* \sim q_{\sigma}(X_{t-1}, \cdot) \), \(Y^* \sim q_{\sigma}(Y_{t-1-L}, \cdot) \) jointly via reflection-maximal coupling of Algorithm \ref{max_reflection_coupling} \; 
Sample \(U \sim \mathcal{U}(0,1) \) \;
\textbf{if} \(U \leq \frac{\pi(X^*) q_{\sigma}(X^*, X_{t-1}) }{\pi(X_{t-1}) q_{\sigma}(X_{t-1}, X^*) } \), \textbf{then} set \( X_t = X^* \) ; \textbf{else} set \( X_t = X_{t-1} \) \;
\textbf{if} \(U \leq \frac{\pi(Y^*) q_{\sigma}(Y^*, Y_{t-1-L}) }{\pi(Y_{t-1-L}) q_{\sigma}(Y_{t-1-L}, Y^*) } \), \textbf{then} set \( Y_{t-L} = Y^* \) ; \textbf{else} set \(  Y_{t-L} =  Y_{t-1-L} \) \;
\smallskip
\lIf{ \( X_{t} = Y_{t-L} \) }{
   \Return \( \tau^{(L)} := t \), and the chains \( (X_t)_{0 \leq t \leq \tau^{(L)} },  (Y_t)_{0 \leq t \leq \tau^{(L)}-L } \)
   }
}
\label{algo:mala_coupling}
\end{algorithm}

\subsection{ Unadjusted Langevin Algorithm }
Unadjusted Langevin proceeds as MALA but without the MH acceptance step. 
Thus an algorithm to sample meeting times for coupled ULA chains follows from the algorithm described for coupled MALA algorithm, simply by removing the acceptance steps. As before, we use 
\( q_{\sigma}(X, \cdot) \sim \mathcal{N}(X + \frac{1}{2}\sigma^2 \nabla \log \pi(X), \sigma^2 \textbf{I}_d ) \) to denote the Langevin proposal.
The ULA results in Section \ref{section:approx} are generated using Algorithm \ref{algo:ula_coupling} with 
\(\sigma = 0.1d^{-1/6}\) for \(d = 50, 100, 200, 300, 400, 500, 600, 800, 1000 \).

\begin{algorithm}[!htb] \caption{ ULA }
\DontPrintSemicolon
\textbf{Input:} lag \( L \geq 1\), random walk step size \( \sigma \)  \;
\textbf{Output:} meeting time \( \tau^{(L)} \); chains \( (X_t)_{0 \leq t \leq \tau^{(L)} }, (Y_t)_{0 \leq t \leq \tau^{(L)}-L } \) \;
Initialize:  generate \(X_0 \sim \pi_0\) and \(Y_0 \sim \pi_0\) \;
 \For{\( t =1,\dots, L \)}{
     Sample $X_t \sim q_{\sigma}(X_{t-1},\cdot)$
}
 \For{\( t > L \)}{
Sample \(X_t \sim q_{\sigma}(X_{t-1}, \cdot) \), \(Y_{t-L} \sim q_{\sigma}(Y_{t-1-L}, \cdot) \) jointly via reflection-maximal coupling of Algorithm \ref{max_reflection_coupling} \;
\smallskip 
\lIf{ \( X_{t} = Y_{t-L} \) }{
   \Return \( \tau^{(L)} := t \), and the chains \( (X_t)_{0 \leq t \leq \tau^{(L)} },  (Y_t)_{0 \leq t \leq \tau^{(L)}-L } \)
   }
}
\label{algo:ula_coupling}
\end{algorithm}

\subsection{Particle independent Metropolis--Hastings}

By construction, \( \tau^{(L)}>L \) almost surely for all the above couplings.
Here we describe a version of coupled particle independent
Metropolis--Hastings (PIMH) which allows coupling at the first step, such that
\( \tau^{(L)}=L \) can occur with positive probability. This coupling was 
introduced in \citet{pmlr-v89-middleton19a}. 

\begin{algorithm}[!htb] \caption{Particle independent Metropolis--Hastings}
\DontPrintSemicolon
    \textbf{Input:} lag \( L \geq 1\), and SMC sampler targeting \(\pi\) \;
\textbf{Output:} meeting time \( \tau^{(L)} \); chains \( (\xi_t, Z_t)_{0 \leq t \leq \tau^{(L)} }, (\tilde{\xi}_t, \tilde{Z}_t)_{0 \leq t \leq \tau^{(L)}-L } \) where \(Z_t,\tilde{Z}_t \) are unbiased estimates of the normalizing constant of \(\pi\) \;
Initialize: Sample $\xi_0, Z_0$ from the SMC sampler \;
 \For{\( t =1,\dots, (L-1) \)}{
     Sample proposal \(\xi^*, Z^*\) from the SMC sampler \; 
     Sample \(U \sim \mathcal{U}(0,1) \) \;
\textbf{if} \(U \leq \frac{Z^*}{Z_{t-1}} \), \textbf{then} set \( \xi_t = \xi^*, Z_t=Z^* \) ; \textbf{else} set \( \xi_t = \xi_{t-1}, Z_t=Z_{t-1} \)
}
\For{\( t = L \)}{
     Sample proposal \(\xi^*, Z^*\) from the SMC sampler \; 
     Sample \(U \sim \mathcal{U}(0,1) \) \;
\textbf{if} \(U \leq \frac{Z^*}{Z_{L-1}} \), \textbf{then} set \( \xi_L = \xi^*, Z_L=Z^* \) ; \textbf{else} set \( \xi_L = \xi_{L-1}, Z_L=Z_{L-1} \) \;
Set \( \tilde{\xi}_0 = \xi^*, \tilde{Z}_0 = Z^* \)
}
\For{\( t > L \)}{
 Sample proposal \(\xi^*, Z^* \) from the SMC sampler \;
 Sample $U \sim \mathcal{U}(0,1)$ \;
\textbf{if} \(U \leq \frac{Z^*}{Z_{t-1}} \), \textbf{then} set \( \xi_t = \xi^*, Z_t=Z^* \) ; \textbf{else} set \( \xi_t = \xi_{t-1}, Z_t=Z_{t-1} \) \;
\textbf{if} \(U \leq \frac{Z^*}{\tilde{Z}_{t-L-1}} \), \textbf{then} set \( \tilde{\xi}_{t-L} = \xi^*, \tilde{Z}_{t-L}=Z^* \) ; \textbf{else} set \( \tilde{\xi}_{t-L} = \tilde{\xi}_{t-L-1}, \tilde{Z}_{t-L}=\tilde{Z}_{t-L-1} \)\;
\If{ \( \xi_{t} = \tilde{\xi}_{t-L}, Z_t=\tilde{Z}_{t-L} \) }{
   \Return \( \tau^{(L)} := t \), and the chains \( (\xi_t, Z_t)_{0 \leq t \leq \tau^{(L)} },  (\tilde{\xi}_t, \tilde{Z}_t)_{0 \leq t \leq \tau^{(L)}-L } \).
   }
}
\label{algo:pimh_coupling}
\end{algorithm}

\end{document}